\tikzset{
	square/.style={
	    inner sep=0mm,
      	outer sep=0mm,
      	rotate=0,
    	draw=none,
    	fill=white
	},
	aliveSquare/.style={
        square,
        minimum size=\innerSize mm,
        draw=none,
        fill=white
    },
	burningSquare/.style={
        square,
        minimum size=\innerSize mm,
        draw=none,
        fill=red!50
    },
  	burntSquare/.style={
        square,
        minimum size=\innerSize mm,
        draw=none,
        fill=black!20
    },
	hex/.style={
		regular polygon,
		regular polygon sides=6,
      	minimum size=\size mm,
	    inner sep=0mm,
      	outer sep=0mm,
      	shape border rotate=30,
      	rotate=0,
      	text centered,
    	draw
	},
	aliveHex/.style={
        hex,
        minimum size=\innerSize mm,
        draw=none,
        fill=white
    },
	burningHex/.style={
        hex,
        minimum size=\innerSize mm,
        draw=none,
        fill=red!50
    },
  	burntHex/.style={
        hex,
        minimum size=\innerSize mm,
        draw=none,
        fill=black!20
    },
    villageHex/.style={
        hex,
        minimum size=\innerSize mm,
        draw=none,
        fill=blue!70
    }
}
\newcommand{\terrain}[4][0.75]{%
\pgfmathtruncatemacro\height{\number\numexpr #2\relax}
\pgfmathtruncatemacro\width{\number\numexpr #3\relax}
\pgfmathsetmacro\stretch{#1}
\pgfmathsetmacro\size{\stretch*10}
\pgfmathsetmacro\innerSize{\size -1}

\begin{scope}
\foreach \row in {1,...,\height}{ 
	\ifodd\row 
    	\foreach \col in {1,...,\width}{
			\node[hex] (r\row;c\col) at ({(\col+1/2)*sin(60)*\stretch},{(-\row/2-\row/4)*\stretch}) {#4};
		}        
	\else
		\foreach \col in {1,...,\width}{
			\node[hex] (r\row;c\col) at ({\col*sin(60)*\stretch},{(-\row/2-\row/4)*\stretch}) {#4};
		}
    \fi
}
\end{scope}
}
\newcommand{\grid}[4]{%
\pgfmathsetmacro\height{\number\numexpr #1\relax}
\pgfmathtruncatemacro\width{\number\numexpr #2\relax}
\pgfmathsetmacro\stretch{.5}
\pgfmathsetmacro\innerSize{\stretch*10-1}

\begin{scope}
\draw[step=\stretch cm,] (0,0) grid (\stretch*\width,\stretch*\height);
\foreach \row in {1,...,\height}{ 
   	\foreach \col in {1,...,\width}{
        \node[#4] (r\row;c\col) at (\stretch*\col-\stretch/2,\stretch*\row-\stretch/2) {#3};
    }
}
\end{scope}
}
\newcommand{\XSays}[3]{%
}
\newcommand{\DK}[1]{{\XSays{DK}{olive}{#1}}}
\newcommand{\RK}[1]{{\XSays{RK}{blue}{#1}}}
\newcommand{\JRS}[1]{{\XSays{JRS}{green}{#1}}}
\newcommand{\bigO}[1]{\ensuremath{O(#1)}\xspace}
\newcommand{\abs}[1]{\ensuremath{\left| #1 \right|}\xspace}
\newcommand{\ie}{i.\,e.\xspace}
\newcommand{\eg}{e.\,g.\xspace}
\newcommand{\ignitionCounter}[1][]{\ifthenelse{\equal{#1}{}}{\ensuremath{x}\xspace}{\ensuremath{x(#1)}\xspace}}
\newcommand{\burningTime}[1][]{\ifthenelse{\equal{#1}{}}{\ensuremath{y}\xspace}{\ensuremath{y(#1)}\xspace}}
\newcommand{\alive}{alive\xspace}
\newcommand{\burning}{burning\xspace}
\newcommand{\dead}{dead\xspace}
\newcommand{\rectangle}{\ensuremath{\mathcal{R}}\xspace}
\newcommand{\wall}{\ensuremath{\pi}\xspace}
\newcommand{\twist}{winding number\xspace}
\newcommand{\sourceCells}{\ensuremath{\mathcal{F}}\xspace}
\newcommand{\targetCells}{\ensuremath{\mathcal{T}}\xspace}
\newcommand{\predictedIgnitionTime}[1][]{\ifthenelse{\equal{#1}{}}{\ensuremath{t_{\text{pi}}}\xspace}{\ensuremath{t_{\text{pi}}( #1 )}\xspace}}
\newcommand{\ignitionTime}[1][]{\ifthenelse{\equal{#1}{}}{\ensuremath{t_{\text{i}}}\xspace}{\ensuremath{t_{\text{i}}( #1 )}\xspace}}
\newcommand{\queue}{\ensuremath{\mathcal{Q}}\xspace}
\newcommand{\ExtractMin}[1]{\ensuremath{\textsc{Extract-Min}(#1)}\xspace}
\newcommand{\LocalCost}[1]{\ensuremath{\textsc{Local-Cost}(#1)}\xspace}
\newcommand{\Update}[1]{\ensuremath{\textsc{Update}(#1)}\xspace}
\newcommand{\PredictIgnition}[1][]{\ifthenelse{\equal{#1}{}}{\ensuremath{\textsc{Predict-Ignition-Time}}\xspace}{\ensuremath{\textsc{Predict-Ignition-Time}(#1)}\xspace}}
\begin{document}

\title{A New Model in Firefighting Theory\thanks{This work has been supported in part by DFG grant Kl 655/19 as part of a DACH project and by NSERC under grant no. RGPIN-2016-06253.}}
%
%
\author{Rolf Klein\inst{1}
	\and David Kübel\inst{1}
	\and Elmar Langetepe\inst{1} 
	\and Jörg-Rüdiger Sack\inst{2} 
	\and Barbara Schwarzwald\inst{1}
}
\authorrunning{R. Klein, D. Kübel, E. Langetepe, J.-R. Sack and B. Schwarzwald}
%
\institute{
Department of Computer Science, Universität Bonn, 53115 Bonn, Germany \\
\email{\{rklein,dkuebel,schwarzwald\}@uni-bonn.de}\\
\email{elmar.langetepe@cs.uni-bonn.de}\\
\and
School of Computer Science, Carleton University, Ottawa, ON K1S 5B6, Canada\\
\email{sack@scs.carleton.ca}}

\maketitle              
\begin{abstract}
Continuous and discrete models \cite{article:bressan2007,article:fomin2016firefighter} for firefighting problems are well-studied in Theoretical Computer Science. 
We introduce a new, discrete, and more general framework based on a hexagonal cell graph to study firefighting problems in varied terrains.
We present three different firefighting problems in the context of this model;
for two of which, we provide efficient polynomial time algorithms and for the third, we show NP-completeness.
We also discuss possible extensions of the model and their implications on the computational complexity.

\keywords{Cellular Automaton, Combinatorial Algorithms, Computational Complexity, Discrete Geometry, Fire Spread Models, Fire Behaviour Modeling, Firefighting, Forest Fire Simulation, Frontal Propagation, Graph Algorithms, Graph Theory, NP-completeness, Undecidability}
\end{abstract}
\section{Introduction and Model Definition}
\label{section:introduction}
Fighting multiple wildfires simultaneously or predicting their propagation involves many parameters one can neither foresee nor control.
For the study of problems in this context, several models have been suggested and investigated in different communities.

In Theoretical Computer Science or Mathematics, models have been investigated, where fire spreads in the Euclidean plane or along edges of a graph; see \eg \cite{article:bressan2007,article:fomin2016firefighter}.
Research in these models usually focuses on proving tight lower and upper bounds on what can be achieved with limited resources:
In continuous models, researchers have been analysing the building speed of barriers which slow down or even stop the fire's expansion;
in discrete models, the number of firefighters available to block/contain/extinguish the fire has been considered.
Tight bounds are only available for simple cases in these models, \eg \cite{article:kim2019geometric}.
\DK{I still don't like this formulation}
For a survey, we refer to \cite{survey:finbow2009}.

In other communities, models have been developed to predict a fire's propagation in a given terrain.
To make the forecast as realistic as possible, some models incorporate thermodynamic or chemical parameters as well as weather conditions including wind speed and direction.
Some of the models are capable to distinguish between fires at different heights such as ground fires and crown fires.
For a survey on theoretical and (semi-) empirical models, see \cite{survey:pastor2003mathematical}.
\todo{Mention drawbacks:
most of these models assume that each point of fire propagates independently of its neighbours.
reduce the fire front to a single line, whereas in practise a whole combustion zone should be taken into account.
}

We introduce a new model with the aim to develop a simple, theoretical framework for fire propagation forecast in large varied terrains and prove some initial results.

\begin{definition}[basic hexagonal model]
Given a partition of the plane into hexagonal cells.
The state of cell $c$ at time $t$ is given by two non-negative integers, \ignitionCounter[c,t] and \burningTime[c,t].
Cell $c$ is called \emph{burning} at time $t$ if $\ignitionCounter[c,t]=0$ and $\burningTime[c,t]>0$ hold; \emph{alive} if $\ignitionCounter[c,t]>0$ and $\burningTime[c,t]>0$; or \emph{dead} if $\burningTime[c,t]=0$ holds.
At the transition from time $t$ to $t+1$, the state of cell $c$ changes as follows:
\begin{itemize}
\item If $c$ is alive at time $t$, then $\ignitionCounter[c, t+1] := \max\lbrace \ignitionCounter[c, t] - b, 0 \rbrace$, where $b$ denotes the number of direct neighbours of $c$ burning at time $t$.
\item If $c$ is burning at time $t$, then $\burningTime[c,t+1] := \burningTime[c,t] -1$.
\end{itemize}
\end{definition}

Intuitively, \ignitionCounter and \burningTime describe the (diminished) resistance against ignition and the (remaining) fuel of an individual cell at time $t$, respectively.
Choosing suitable values for the cells, one can model natural properties of a given terrain: different types of ground and fuel; natural obstacles such as mountains or rivers.
A cell of dry grassland might get small integers for both values such that it catches fire easily and burns down quickly.
In contrast, we expect both values to be comparatively high for a moist forest such that the forest keeps burning for quite a while, once it caught fire.
\autoref{figure:exampleSpread} shows an example how a fire expands over time from a single source in a small lattice.

By this definition, our basic hexagonal model is a cellular automaton \cite{book:toffoli1987cellular}, whose cells can have state sets of different cardinality.
We observe that in the basic hexagonal model a dead cell can never become alive or burning again.
This is a major difference to cellular automata like \emph{Conway's Game of Life} \cite{article:gardner1970mathematical} or \emph{Wolfram’s model} \cite{article:wolfram1983statistical}.
Another difference is that cells can die from overpopulation in Conway's Game of Life, for which there is no equivalent rule in our model.%
%
%
\RK{
Wir haben vergessen zu sagen, ob und wie unser Modell sich von gängigen neuronalen Netzen unterscheidet.
Ein Punkt ist sicher die regelmäßige Struktur und die unveränderlichen Übergangsfunktionen.
Aber ist unser Modell am Ende ein Spezialfall?
}\DK{
I do not think we need to compare our model to (artificial) Neural Networks:\\
Both models seem similar in the sense that a cell in our model can be seen as some sort of neuron: It stimulates neighbouring cells after a predefined threshold and is stimulated by neighbouring cells itself.
However, the mutual interaction of cells in our model is very different from the predefined interaction between neurons in neural networks that is usually described by a non-planar, directed graph.
}%
\JRS{%
It is not clear how to apply it directly. There might be a way but this could be research (see below)\\
An input of raster data is well-suited to the tensors on which the neural networks typically operate. If we assume fixed dimensions for the raster data, it can be provided as input to a network in the same way as an image. Both are two-dimensional grids of intensity values. Furthermore, multiple layers of raster data can easily be combined in the input since the networks typically accept images with multiple channels (e.g., RGB).\\
Providing a useful output on a typical classification network may be difficult since they are usually used for determining a single class. There may be certain tricks that could be applied here or some form of multi-label classification that could be used. Instead, of working with a classification network, my suggestion is to try a generative network intended for style transfer. These networks take an input image and produce a new version as output which has a particular style applied to it. The idea is to consider the "style" to be the evolution of the raster data after a fixed time interval. In this way, the network can attempt to predict spread/propagation. Furthermore, if you take the output and then provide it as new input, you can continue to predict farther into the future.\\
As to whether  anything like this has been done in existing research, I'm not sure. I haven't tried looking into it and I'm not entirely sure what the best search terms would be. The word "propagation" will give results on propagation of data through the network so that won't be useful.
Some popular style transfer models are CycleGAN and DiscoGAN. There may be better models that have come since these ones. One of the challenges may be in finding large enough data sets to train the network. There is also the concern of finding appropriate training "pairs". To learn
how to apply a style to an image, the network ideally needs to know what the image looks like with and without the style in its training data. In many cases, such pairs are not available to use for training. I think the papers talk about this issue and have ways of dealing with it.
}%
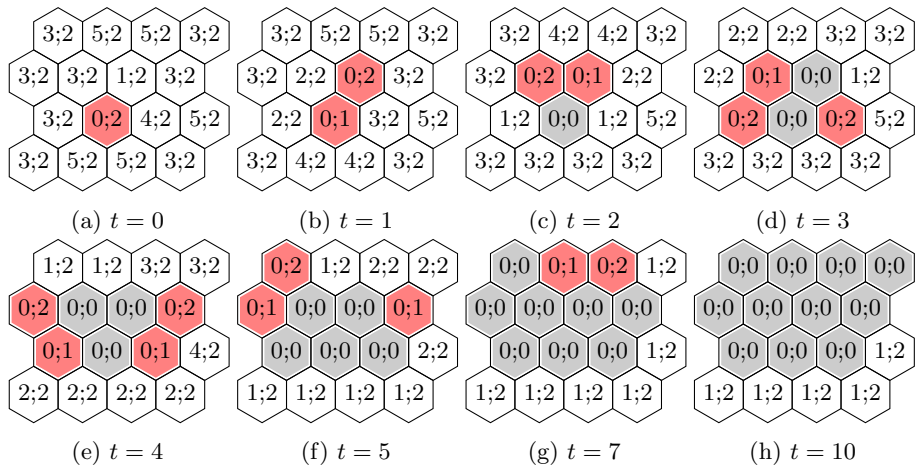
\begin{figure}
\centering
	\begin{subfigure}[b]{0.24\textwidth}
		\centering
		\begin{tikzpicture}
		\terrain{4}{4}{3;2};
		\node [burningHex] at (r3;c2) {0;2};
		\node [aliveHex] at (r1;c2) {5;2};
		\node [aliveHex] at (r1;c3) {5;2};
		\node [aliveHex] at (r2;c3) {1;2};
		\node [aliveHex] at (r3;c3) {4;2};
       \node [aliveHex] at (r3;c4) {5;2};
		\node [aliveHex] at (r4;c2) {5;2};
		\node [aliveHex] at (r4;c3) {5;2};
		\end{tikzpicture}
		\caption{$t=0$}
	\end{subfigure}
	\begin{subfigure}[b]{0.24\textwidth}
		\centering
		\begin{tikzpicture}
		\terrain{4}{4}{3;2};
		\node [burningHex] at (r2;c3) {0;2};
		\node [burningHex] at (r3;c2) {0;1};
		\node [aliveHex] at (r1;c2) {5;2};
		\node [aliveHex] at (r1;c3) {5;2};
		\node [aliveHex] at (r2;c2) {2;2};
		\node [aliveHex] at (r3;c1) {2;2};
       \node [aliveHex] at (r3;c4) {5;2};
		\node [aliveHex] at (r4;c2) {4;2};
		\node [aliveHex] at (r4;c3) {4;2};
		\end{tikzpicture}
		\caption{$t=1$}
	\end{subfigure}
	\begin{subfigure}[b]{0.24\textwidth}
		\centering
		\begin{tikzpicture}
		\terrain{4}{4}{3;2};
		\node [burntHex] at (r3;c2) {0;0};
		\node [burningHex] at (r2;c3) {0;1};
		\node [burningHex] at (r2;c2) {0;2};
		\node [aliveHex] at (r1;c2) {4;2};
		\node [aliveHex] at (r1;c3) {4;2};
		\node [aliveHex] at (r2;c4) {2;2};
		\node [aliveHex] at (r3;c1) {1;2};
		\node [aliveHex] at (r3;c3) {1;2};
       \node [aliveHex] at (r3;c4) {5;2};
    \end{tikzpicture}
		\caption{$t=2$}
	\end{subfigure}
	\begin{subfigure}[b]{0.24\textwidth}
		\centering
		\begin{tikzpicture}
		\terrain{4}{4}{3;2};
		\node [burntHex] at (r3;c2) {0;0};
		\node [burntHex] at (r2;c3) {0;0};
		\node [burningHex] at (r2;c2) {0;1};
		\node [burningHex] at (r3;c1) {0;2};
		\node [burningHex] at (r3;c3) {0;2};
		\node [aliveHex] at (r1;c1) {2;2};
		\node [aliveHex] at (r1;c2) {2;2};
		\node [aliveHex] at (r2;c1) {2;2};
		\node [aliveHex] at (r2;c4) {1;2};
       \node [aliveHex] at (r3;c4) {5;2};
		\end{tikzpicture}
		\caption{$t=3$}
	\end{subfigure}
	\begin{subfigure}[b]{0.24\textwidth}
		\centering
		\begin{tikzpicture}
		\terrain{4}{4}{2;2};
		\node [burntHex] at (r2;c2) {0;0};
		\node [burntHex] at (r2;c3) {0;0};
		\node [burntHex] at (r3;c2) {0;0};
		\node [burningHex] at (r2;c1) {0;2};
		\node [burningHex] at (r2;c4) {0;2};
		\node [burningHex] at (r3;c1) {0;1};
		\node [burningHex] at (r3;c3) {0;1};
		\node [aliveHex] at (r1;c1) {1;2};
		\node [aliveHex] at (r1;c2) {1;2};
		\node [aliveHex] at (r1;c3) {3;2};
        \node [aliveHex] at (r1;c4) {3;2};
		\node [aliveHex] at (r3;c4) {4;2};
		\end{tikzpicture}
		\caption{$t=4$}
	\end{subfigure}
	\begin{subfigure}[b]{0.24\textwidth}
		\centering
		\begin{tikzpicture}
		\terrain{4}{4}{1;2};
		\node [burntHex] at (r2;c2) {0;0};
		\node [burntHex] at (r2;c3) {0;0};
		\node [burntHex] at (r3;c1) {0;0};
		\node [burntHex] at (r3;c2) {0;0};
		\node [burntHex] at (r3;c3) {0;0};
		\node [burningHex] at (r1;c1) {0;2};
		\node [burningHex] at (r2;c1) {0;1};
		\node [burningHex] at (r2;c4) {0;1};
		\node [aliveHex] at (r1;c3) {2;2};
		\node [aliveHex] at (r1;c4) {2;2};
		\node [aliveHex] at (r3;c4) {2;2};
		\end{tikzpicture}
		\caption{$t=5$}
	\end{subfigure}
	\begin{subfigure}[b]{0.24\textwidth}
		\centering
		\begin{tikzpicture}
		\terrain{4}{4}{1;2};
		\node [burntHex] at (r1;c1) {0;0};
		\node [burntHex] at (r2;c1) {0;0};
    	\node [burntHex] at (r2;c2) {0;0};
		\node [burntHex] at (r2;c3) {0;0};
		\node [burntHex] at (r2;c4) {0;0};
		\node [burntHex] at (r3;c1) {0;0};
    	\node [burntHex] at (r3;c2) {0;0};
		\node [burntHex] at (r3;c3) {0;0};
		\node [burningHex] at (r1;c2) {0;1};
		\node [burningHex] at (r1;c3) {0;2};
		\end{tikzpicture}
		\caption{$t=7$}
	\end{subfigure}
	\begin{subfigure}[b]{0.24\textwidth}
	    \centering
    	\begin{tikzpicture}
		\terrain{4}{4}{1;2};
		\node [burntHex] at (r1;c1) {0;0};
    	\node [burntHex] at (r1;c2) {0;0};
		\node [burntHex] at (r1;c3) {0;0};
		\node [burntHex] at (r1;c4) {0;0};
		\node [burntHex] at (r2;c1) {0;0};
    	\node [burntHex] at (r2;c2) {0;0};
		\node [burntHex] at (r2;c3) {0;0};
		\node [burntHex] at (r2;c4) {0;0};
		\node [burntHex] at (r3;c1) {0;0};
    	\node [burntHex] at (r3;c2) {0;0};
		\node [burntHex] at (r3;c3) {0;0};
		\end{tikzpicture}
		\caption{$t=10$}
	\end{subfigure}
\caption{
Fire spreading in a hexagonal lattice.
The \ignitionCounter- and \burningTime-values are given in the cells as pairs of the form \ignitionCounter;\burningTime at time $t$.
The state of the cells is indicated by colours: 
Burning cells are red, alive cells are white, and dead cells are grey.
%
At time $t=10$, the propagation stops and several living cells remain.}
\label{figure:exampleSpread}
\end{figure}
\paragraph{Organization of the paper.}
The rest of this paper is organized as follows.
\autoref{section:problems} discusses three problems in context of the basic hexagonal model and their algorithmic solutions.
Variants of the basic model will be discussed in \autoref{section:modelVariants}.
We conclude with \autoref{section:conclusion}.

\section{Results for the Basic Hexagonal Model}
\label{section:problems}
There is a multitude of interesting questions one can formulate within this model.
In this paper, we will address the following three problems. 

Suppose we restrict the hexagonal grid to a rectangular domain \rectangle consisting of $n$ cells.
Let us assume that initially all cells along the right boundary of \rectangle are on fire and all cells along the left boundary of \rectangle represent a village that must be protected from the fire.
To this end, we want to connect the upper to the lower boundary of \rectangle by a path \wall of cells that separates the village on the left and the fire approaching from the right; see \autoref{subfigure:villageProtection}.
To make \wall fire-resistant, we can fortify the cells on \wall by increasing their \ignitionCounter-values.

In the \emph{first version} of this problem, all cells on \wall will have their \ignitionCounter-values raised by the same amount $k$.
This corresponds to a fly-over by aircraft that douses each cell with the same amount of water.
We want to compute the minimum $k$ for which such a protecting path \wall exists.
In \autoref{subsection:airplane}, we present a solution that runs in time \bigO{n \log n \log Y} where $Y$ is the maximum sum of \burningTime-values of direct neighbours over all cells in \rectangle.
Our algorithm is based on a fast propagation routine that is interesting in its own right for simulation purposes.

In the \emph{second version} of the above problem, firefighters can increase \ignitionCounter-values of cells individually.
Now we are interested in finding a separating path \wall for which the sum of these \ignitionCounter-increments of cells on \wall is minimal.
Although this appears to be a shortest-path problem, we have not been able to apply a classic graph algorithm like Dijkstra for reasons that will be explained in \autoref{subsection:villageProtection}.
Our algorithm runs in time \bigO{n \sqrt{n} \log n}, provided that all cells have identical \burningTime-values and each \ignitionCounter-value is upper bounded by $2\burningTime+1$.

In the \emph{third version}, we no longer assume that cells along the right boundary of \rectangle are on fire, while cells along the left edge have to be protected. 
Instead, the cells of the village are given by a set \targetCells and the fire is allowed to start at cells of a set \sourceCells; see \autoref{subfigure:npCompleteness}.
We now ask for a subset of \sourceCells with $m$ cells that will, when put on fire, burn all cells of the village to the ground.
More precisely, we are interested in answering the following decision problem:
Are there $m$ cells in \sourceCells which, when put on fire, will eventually ignite all cells in \targetCells?
In \autoref{subsection:npCompleteness} we prove this problem to be NP-complete.

\begin{figure}[tb]
\centering
    \begin{subfigure}[b]{0.47\textwidth}
        \centering
        \begin{tikzpicture}
        \terrain[0.25]{20}{24}{};
        \foreach \row in {1,...,20}{ 
            \foreach \col in {14,...,24}{
                \node [burningHex] at (r\row;c\col) {};
            }
        }
        \foreach \row in {6,...,13}{ 
            \foreach \col in {10,...,13}{
                \node [burningHex] at (r\row;c\col) {};
            }
        }
        \foreach \row in {1,...,20}{
            \node [villageHex] at (r\row;c1) {};
        }		
        \foreach \col [count=\i] in {14,13,12,12,11,10,9,8,7,8,9,10,10,11,13,14,14,13,12,13}{
            \node [burntHex] at (r\i;c\col) {};
        }
        \node [burningHex] at (r19;c13) {};
        \node [burningHex] at (r3;c13) {};
        \node [burningHex] at (r4;c13) {};
        \node [burningHex] at (r5;c12) {};
        \node [burningHex] at (r5;c13) {};
        \node [burningHex] at (r8;c9) {};
        \node [burningHex] at (r9;c8) {};
        \node [burningHex] at (r9;c9) {};
        \node [burningHex] at (r10;c9) {};

        \node [burntHex] at (r2;c14) {};
        \node [burntHex] at (r6;c11) {};
        \node [burntHex] at (r7;c8) {};
        \node [burntHex] at (r11;c8) {};
        \node [burntHex] at (r13;c12) {};
        \node [burntHex] at (r13;c13) {};
        \node [burntHex] at (r14;c12) {};
        \node [burntHex] at (r14;c14) {};
        \node [burntHex] at (r18;c14) {};
        
        \node [aliveHex] at (r5;c20) {};
        \node [aliveHex] at (r6;c20) {};
        \node [aliveHex] at (r5;c21) {};
        \node [aliveHex] at (r6;c21) {};
        \end{tikzpicture}
        \caption{
        A path \wall (grey cells) prevents a fire (red cells) from reaching the village (blue cells) at the left boundary of \rectangle.}
        \label{subfigure:villageProtection}
	\end{subfigure}
	~
	\begin{subfigure}[b]{0.47\textwidth}
		\centering
		\begin{tikzpicture}
        \terrain[0.25]{20}{24}{};
        \node [burningHex] at ( r2;c5) {};
        \node [burningHex] at ( r3;c20) {};
        \node [burningHex] at ( r6;c12) {};
        \node [burningHex] at ( r7;c7) {};
        \node [burningHex] at (r10;c17) {};
        \node [burningHex] at (r14;c4) {};
        \node [burningHex] at (r17;c13) {};
        \node [burningHex] at (r18;c19) {};
        \node [villageHex] at ( r2;c12) {};
        \node [villageHex] at ( r4;c16) {};
        \node [villageHex] at ( r5;c14) {};
        \node [villageHex] at ( r8;c3) {};
        \node [villageHex] at ( r9;c21) {};
        \node [villageHex] at (r11;c12) {};
        \node [villageHex] at (r13;c9) {};
        \node [villageHex] at (r18;c19) {};
        \node [villageHex] at (r19;c6) {};        
        		\end{tikzpicture}
		\caption{
		Given two set of cells \sourceCells (red) and \targetCells (blue).
		Can $m$ cells out of \sourceCells burn down \emph{all} cells of \targetCells, when put on fire?}
		\label{subfigure:npCompleteness}
	\end{subfigure}
	\caption{Problem variants on a rectangular finite domain \rectangle.}
	\label{figure:problemVariants}
\end{figure}
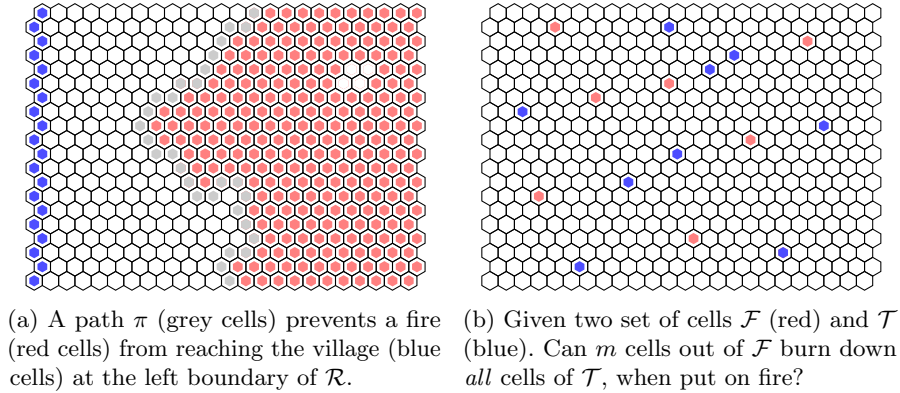
\subsection{Homogeneous fortification}
\label{subsection:airplane}
For this problem, consider a rectangular domain \rectangle of our basic hexagonal model, in which all cells on the right boundary are on fire and all cells on the left boundary represent a village that must be protected.
We call a path \wall connecting the lower to the upper boundary of \rectangle a \emph{separating path}.
We call $\wall$ a \emph{protecting path} for $k$ if increasing the 
\ignitionCounter-value of all cells along \wall by $k$ ensures that the fire never ignites a cell of the village.
The natural optimization problem is to find the minimum $k$ for which a protecting path exists.

To solve this problem, we study the corresponding decision problem:
Given \rectangle and $k$, does a protecting path \wall exist?
For $k=0$, it can be solved by simulating the fire propagation step-by-step, where all cells on the right boundary of \rectangle are initially ignited.
Consider the map at the end of the simulation:
All cells are either \alive or \dead but none \burning; all \dead cells form a connected component including the initially \burning cells on the right boundary of \rectangle.
If no protecting path for $k=0$ exists, at least one of the village cells will be \dead.
However, if a protecting path for $k=0$ exists, then some of the \alive cells form a connected component that includes all village cells on the left boundary of \rectangle, see \autoref{subfigure:villageProtection}.
The \emph{fire border} of this component which are \alive cells with a direct \dead neighbour, form a protecting path \wall for $k=0$.

This approach can be extended to solve the decision problem for larger values of $k$:
First increase the \ignitionCounter-value of all cells by $k$, then run the simulation algorithm.
If a cell of the village is \dead at the end of the simulation, no protecting path for $k$ exists.
Otherwise, consider the fire border of the connected set of \alive cells that includes the village and induces the separating path \wall.
By construction, all cells of \wall stay \alive, when their \ignitionCounter-value is increased by $k$.
This holds even if all cells right of \wall are \burning or \dead.
This also holds when the \ignitionCounter-values of all cells of $\rectangle\setminus \wall$ remain untouched:
Increasing the \ignitionCounter-value of all \burning cells is irrelevant for the survival of the village;
increasing the \ignitionCounter-value of any other \alive cell is irrelevant as well, since only \alive cells on \wall have \burning or \dead neighbours.
Therefore, \wall is a protecting path in \rectangle for $k$.

To solve the optimization problem, we combine the decision-algorithm with binary search.
It remains to give a sensible upper bound on $k$ and an efficient algorithm for the simulation of fire propagation.
Let $Y$ be the maximum sum of \burningTime-values of all direct neighbours of a cell, over all cells in the grid.
As every cell's \ignitionCounter-value can only be decreased by at most the sum of its neighbours \burningTime, we know that $0 < k\leq Y$ holds.
A brute force step-by-step simulation over time results in an algorithm with a worst-case running time of \bigO{n^2 \cdot x_{max}}, where $x_{max}$ is the maximal value of \ignitionCounter over all cells.
However, given the state of all \burning or \dead cells at a time $t$, one can determine the next cell to ignite.
This intuition gives rise to an \bigO{n \log n} Dijkstra-inspired algorithm independent of the cells values, which uses a priority queue for the retrieval of the next cell to ignite.
Details and proofs can be found in Appendix \ref{appendix:fastSimulationAlgorithm}.

\begin{theorem}
Let \rectangle be a rectangular domain, where all cells along the right boundary of \rectangle are on fire and all cells along the left boundary have to be protected.

The minimum $k$ for which a protecting path exists can be found in time \bigO{n \log n \log Y} where $Y$ denotes the maximum sum of \burningTime-values of all direct neighbours of a cell, over all cells.
\end{theorem}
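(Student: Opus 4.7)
The plan is to reduce the optimization to the decision problem ``does a protecting path exist for a given $k$?'', solve this decision problem using the fast simulation primitive from Appendix~\ref{appendix:fastSimulationAlgorithm}, and then apply binary search over~$k$.

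First, I formalize the decision procedure: raise every \ignitionCounter-value in \rectangle uniformly by~$k$, invoke the $O(n \log n)$ fire simulation, and return success iff every village cell along the left boundary is still \alive at termination. If the procedure succeeds, the path \wall is obtained as the fire border of the connected component of \alive cells containing the village; if it fails, some village cell burns, so no protecting path for this $k$ can exist.

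The main conceptual step is to verify that fortifying only \wall by~$k$ indeed protects the village, even though the decision procedure fortifies the whole grid. I would argue this by localizing damage: for $c \in \wall$, partition $\neighbours{c}$ into neighbours inside the \alive component and neighbours on the \dead side. Neighbours inside the \alive component cannot burn at all in the reduced scenario, because every \alive cell off \wall has all its neighbours inside the component, so an inductive argument along time shows the fire never crosses \wall. The damage $c$ absorbs therefore comes solely from its \dead-side neighbours, each of which contributes at most its own \burningTime-value; this is exactly the bound that already sufficed to keep $c$ \alive in the fully fortified simulation. I expect this localization to be the main obstacle, since it requires arguing that fortifying fewer cells does not alter the relevant damage pattern.

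Given the decision procedure, monotonicity in~$k$ is immediate, so binary search applies. For the upper bound, if $k = Y$, then after fortification every cell's \ignitionCounter exceeds the total \burningTime its six neighbours can ever deliver, hence no cell beyond the initially \burning ones ignites and the procedure trivially succeeds. Binary search over $k \in \{0,1,\dots,Y\}$ thus takes $O(\log Y)$ iterations, each of which runs one simulation in time $O(n \log n)$ and a linear connectivity check, yielding the claimed $O(n \log n \log Y)$ total running time.
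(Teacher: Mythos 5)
Your proposal is correct and takes essentially the same route as the paper: reduce to the decision problem by uniformly raising all \ignitionCounter-values by $k$, run the \bigO{n \log n} simulation, extract \wall as the fire border of the \alive component containing the village, justify that fortifying only \wall suffices because each of its cells already withstands the worst-case damage (the full \burningTime-sum of its \dead neighbours), and wrap this in a binary search over $k \leq Y$. Your ``localization'' argument is just a slightly more explicit version of the paper's observation that only \alive cells on \wall have \burning or \dead neighbours, so fortifying the rest of the grid is irrelevant.
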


\subsection{Selective fortification}
\label{subsection:villageProtection}
Similar to \autoref{subsection:airplane}, consider a rectangular domain \rectangle of our basic hexagonal model in which all cells on the right boundary are on fire and all cells on the left boundary represent a village that must be protected.
We call a path \wall connecting the lower to the upper boundary a \emph{separating path}.
The path can be fortified to protect the village by individually increasing the \ignitionCounter-value of each cell along \wall.
For a given path \wall, we call the sum of those increments the \emph{fortification cost} of \wall.
The natural optimization problem is to find a separating path \wall that minimizes the fortification cost.

To begin with, observe that there is always a separating path with minimal fortification cost such that every cell along \wall has a direct, \dead neighbour when the simulation ends:
Any cell of \wall without a direct, \dead neighbour can be excluded from \wall without increasing the fortification cost;
all cells with a direct, \dead neighbour that do not belong to \wall can be included to \wall since they do not require any fortification costs at all.
Doing this, we obtain a separating path where all cells have \dead neighbours.
Therefore, we may restrict our search to separating paths to the right of which all cells are \dead.

Moreover, this observation allows to compute the fortification costs of such a path:
For a cell $c$ of \wall, let $Y_r$ be the sum of $y$-values of all neighbours of $c$ to the right of \wall.
The fortification cost of $c$ is the minimum $k$ such that $\ignitionCounter[c,0] + k = Y_r + 1$.
The fortification cost of \wall is the sum of the fortification costs of all cells of \wall.

Finding a separating path \wall of cells is equivalent to finding a path $\wall_b$ along corners and edges of the cells:
Using the observation stated above, we may conclude that there is such a path $\wall_b$, where cells to the left belong to \wall and cells to the right are \dead.
This path $\wall_b$ lies in the graph given by the corners and edges of the cells, which we call the border graph.
Similar to the previous denotation, we call a path $\wall_b$ in the border graph \emph{separating} if it connects the upper to the lower boundary of \rectangle. 
To distinguish left from right, we replace every edge $\{v,w\}$ in the border graph by two directed edges $(v,w)$ and $(w,v)$.

\begin{wrapfigure}{l}{0.35\textwidth}
	\centering
	\includegraphics[width=0.35\textwidth]{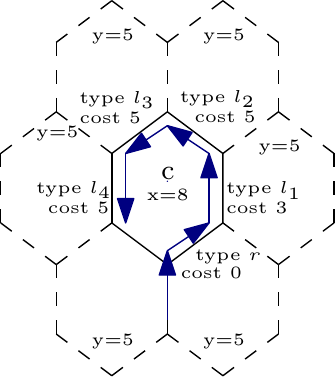}
	\caption{A local-cost example.}
	\label{fig:costExample}
\end{wrapfigure}
When transforming the optimization problem on the cells into a shortest-path problem on the border graph, it is not obvious how to assign the fortification cost of a cell to the adjacent edges;
consider the example depicted in \autoref{fig:costExample}, where $\burningTime = 5$ for all cells and cell $c$ has $\ignitionCounter = 8$.
The path $\wall_b$ in question uses five edges of $c$, whose right neighbour cells are considered as burning.
The crucial idea is to charge these edges for the fortification cost of $c$ depending on their occurrence in $\wall_b$:
The first directed edge of $\wall_b$ along $c$ gets cost $0$ because no additional fortification is necessary to protect $c$ from a single burning neighbour; 
the second edge gets cost $3$;
every further edge gets cost $5$, since the \ignitionCounter-value of $c$ has to be increased by $5$ for every additional \burning neighbour.
Unfortunately, this dynamic assignment of costs, where the cost of an edge depends on the previous edges of the path, rules out a direct solution via finding a shortest-path:
A shortest path might visit the border of a cell several times; the edges along the same cell do not necessarily have to lie on the path in direct succession.
Hence, the cost of an edge can be influenced by any previous edge in the path.
\DK{%
Note that such shortest-path problems with dynamic edge weights are often NP-hard, \eg:
If, at a vertex $v$, we can change the cost of an edge from 1 to $n$ depending on whether the number of edges used in the path that leave $v$ is non-zero, we could reduce Hamilton Path to this shortest-path problem.
}%

In general, the following two problems rule out a direct solution via shortest-path finding algorithms:
(1) A shortest path $\wall_b$ does not have to be simple and can have self-intersections, see \autoref{fig:intersectionExample};
(2) edges of $\wall_b$ along the same cell $c$ do not have to lie on $\wall_b$ in direct succession, it can leave and \emph{revisit} $c$ multiple times, see \autoref{fig:revisitExample}.
In the following, we consider the problem for the case where the \burningTime-values are identical for all cells and $0 < \ignitionCounter[c,0] \leq 2 \burningTime + 1$ holds.
This implies that each cell can always be ignited by three direct, \burning neighbours.
Based on these assumptions, we are able to prove that none of these problems occurs for a so-called shortest local-cost path.

Let $e$ be an edge along a cell $c$ of resistance $\ignitionCounter_e:=\ignitionCounter[c,0]$.
We say $e$ is of type $r$ if it comes after a right turn or is the very first edge of path $\wall_b$.
We say $e$ is of type $l_k$ if it comes after the $k^{th}$ consecutive left turn of $\wall_b$ along $c$.
Thus, $e$ is the $(k+1)^{th}$ consecutive edge along the same cell $c$ on the left-hand side of $\wall_b$.
Thus, we can define the local-cost of an edge $e$ at cell $c$ depending on $\ignitionCounter_e$ and its type:
\begin{equation*}
c(e, \text{type}) =
\begin{cases}
\max (0, y+1-\ignitionCounter_e), & \text{if type} = r\\
\min (y, 2y+1-\ignitionCounter_e), & \text{if type} = l_1\\
y, & \text{if type} \in \lbrace l_2,l_3,l_4\rbrace. \\
\end{cases}
\end{equation*}

\begin{definition}[Shortest local-cost path]
Let $s, t$ be vertices in the border graph, where all cells have identical \burningTime-values and $0<\ignitionCounter[c,0]<2\burningTime[c,0] + 1$ holds.
Then, a shortest local-cost $s$-$t$-path is a path from $s$ to $t$ of minimum local edge cost as defined above.
\end{definition}

Moreover, we call the difference between the number of $r$-edges and $l_k$-edges of a path $\wall_b$ in the border graph the \emph{\twist} of $\wall_b$.

In the following, we prove that there is a shortest local-cost path $\wall_b$ with \twist one that neither has (1) self-intersections, as shown in \autoref{fig:intersectionExample}, nor (2) revisits as shown in \autoref{fig:revisitExample}:
\autoref{lem:windingNumber} and \autoref{lem:intersection} together prove (1), while \autoref{lem:revisits} proves (2).
Thus, the assigned local-cost of $\wall_b$ are the true fortification costs of the corresponding separating path \wall.
\begin{figure}[tb]
	\centering
	\begin{subfigure}[b]{0.51\textwidth}
		\includegraphics[width=\textwidth]{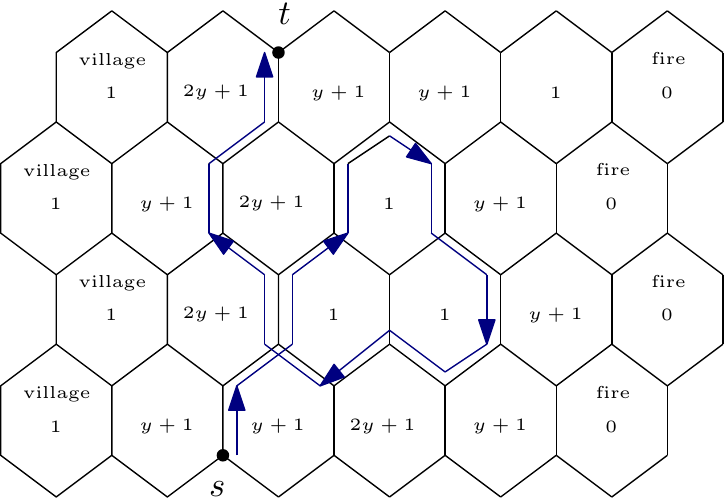}
		\caption{
        A self-intersecting, shortest local-cost \mbox{$s$-$t$-path} with cost~$0$ and \twist~$7$.
		}
		\label{fig:intersectionExample}
	\end{subfigure}
	~
	\begin{subfigure}[b]{0.43\textwidth}
		\includegraphics[width=\textwidth]{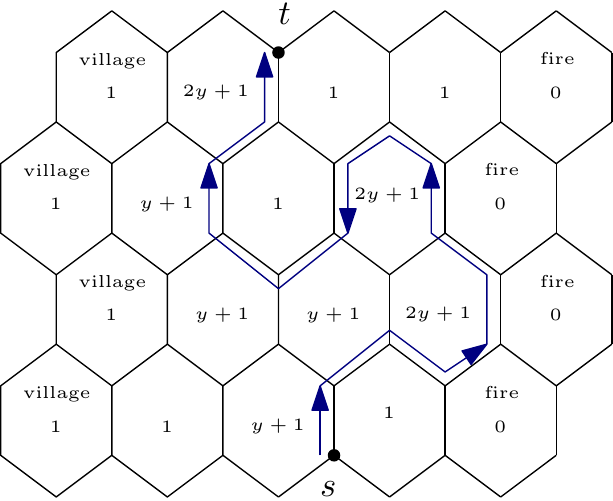}
		\caption{
		A simple $s$-$t$-path $\wall_b$ with local-cost $4y$, \twist $1$, and a revisit.
		The corresponding separating path \wall has fortification cost $5y$.
		}
		\label{fig:revisitExample}
	\end{subfigure}
	\caption{
	Two example domains which illustrate problems (1) and (2) for standard shortest-path finding algorithms.
	All cells have the same \burningTime-value and \ignitionCounter-values as denoted in the cells.
	}
	\label{fig:problemPaths}
\end{figure}

\begin{lemma}
    \label{lem:windingNumber}
	Let $s,t$ be vertices on the upper and lower boundary of the border graph of \rectangle and $\wall_b$ be an $s$-$t$ path.
	If $\wall_b$ is simple, then it has \twist~$1$.
\end{lemma}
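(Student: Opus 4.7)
Since every vertex of the hexagonal border graph has degree three and the three incident edges meet at pairwise angles of $120^\circ$, at every internal vertex of $\wall_b$ the path must turn left or right by exactly $60^\circ$. Let $L$ and $R$ denote the total number of left and right turns along $\wall_b$, respectively. Every edge of $\wall_b$ other than the first one comes immediately after one of these turns, so by the definitions of the edge types, $\wall_b$ contains exactly $R+1$ edges of type $r$ (the initial edge together with the $R$ edges immediately after a right turn) and exactly $L$ edges of some type $l_k$ (one per left turn). Hence the \twist of $\wall_b$ equals $(R+1)-L$, and the lemma reduces to showing $R=L$.

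I plan to prove $R=L$ by a total-turning (discrete Gauss--Bonnet) argument. Viewing $\wall_b$ as a polygonal curve in the plane, each left turn contributes $+60^\circ$ and each right turn $-60^\circ$ to the signed total turning, so the total turning of $\wall_b$ equals $60^\circ(L-R)$. I would extend $\wall_b$ into a simple closed Jordan curve $\Gamma$ by appending an arc $\sigma$ from $t$ back to $s$ that runs entirely outside $\rectangle$ (for example, leaving $t$ downward, going west beneath $\rectangle$, up along its left side, and east across the top to $s$). Because $\wall_b$ is simple and lies inside $\rectangle$ while $\sigma$ lies outside, $\Gamma$ is a simple closed curve, and the total-turning theorem asserts that its signed total turning equals $\pm 360^\circ$.

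Decomposing this identity as (turning along $\wall_b$) $+$ (turning along $\sigma$) $+$ (endpoint transition turnings at $s$ and $t$) $=\pm 360^\circ$ reduces the problem to verifying that the three non-$\wall_b$ contributions always sum to $\pm 360^\circ$. This is the step I expect to be the main obstacle: the vertices $s$ and $t$ can occupy different kinds of positions along the jagged top and bottom boundaries of $\rectangle$, and each admits more than one possible first or last edge for $\wall_b$, so a short case analysis over the finitely many combinations of boundary-vertex types and outgoing/incoming edge directions is required to confirm that the closure-plus-endpoint contribution is always $\pm 360^\circ$. Once this bookkeeping is done, the $\wall_b$-summand must vanish, giving $60^\circ(L-R)=0$, hence $L=R$, and therefore \twist $=(R+1)-L=1$, as claimed.
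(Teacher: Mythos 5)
Your proposal takes essentially the same route as the paper: each turn in the border graph contributes $\pm 60^\circ$, the \twist equals $(R+1)-L$, and the claim reduces to showing that a simple $s$-$t$-path has zero net turning, so $R=L$. The paper simply asserts this last fact from the verticality of the first and last edges (which by itself only pins the turning down modulo $360^\circ$); your closure-to-a-simple-Jordan-curve plus total-turning argument is the standard way to make the use of simplicity explicit, and the endpoint case analysis you defer is routine bookkeeping rather than a genuine gap.
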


\begin{proof}
	As all turns in the regular grid have exactly the same angle, the \twist is a measure of the turn angle of the path.
	For any right turn, the total angular turn of the path decreases by $\frac{\wall}{3}$ and every left turn increases the turn angle by $\frac{\wall}{3}$.
	A simple $s$-$t$-path has a turn angle of $0$ as first and last edge are both vertical.
	Hence, the number of left turns equals the number of right turns in $\wall_b$.
	Since the first edge is considered to be an $r$-edge, to assure that local-costs of the first cell are well defined, $\wall_b$ has \twist~$1$.
\end{proof}

Note that this does not directly solve our first problem, as it holds only for one direction: an $s$-$t$-path with \twist $1$ might still contain intersections.
The reverse holds because of our restrictions on the $\ignitionCounter$- and $\burningTime$-values.

\begin{lemma}
	\label{lem:intersection}
	Any shortest local-cost $s$-$t$-path with \twist $1$ is simple.
\end{lemma}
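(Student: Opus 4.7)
The plan is to argue by contradiction. We would suppose that $\wall_b$ is a shortest local-cost $s$-$t$-path with \twist $1$ that contains a self-intersection. Let $v$ be a vertex visited at least twice by $\wall_b$, chosen so that the closed sub-walk $C$ between two consecutive visits of $v$ is minimal; by planarity of the hexagonal border graph, $C$ is then a simple cycle enclosing a non-empty union of cells. We would write $\wall_b = P_1 \cdot C \cdot P_2$ and aim to replace $C$ by a short local bypass at $v$ to produce another $s$-$t$-path $\wall_b'$ of strictly smaller local cost and the same \twist, contradicting the minimality of $\wall_b$.

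The first ingredient will be \emph{winding compensation}. Any simple cycle $C$ in the hexagonal border graph corresponds to a net angular turn of $\pm 2\pi$ and therefore shifts the difference between left and right turn counts of $\wall_b$ by exactly $\pm 6$, so removing $C$ alone would break the \twist $1$ condition. Instead, we would cut $\wall_b$ at $v$ and reroute locally: since every vertex in the border graph has degree $3$, a short detour near $v$ can be inserted whose total turn contribution cancels the $\pm 6$ shift induced by the deletion, keeping the overall \twist equal to $1$.

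The second ingredient will be a \emph{strict cost decrease}. Using the hypothesis $0 < \ignitionCounter \leq 2\burningTime + 1$, every $l_k$-edge for $k \geq 2$ has local cost exactly $\burningTime$, and every $l_1$-edge has local cost at least $\max(0, 2\burningTime + 1 - \ignitionCounter)$. Since $C$ has at least six edges and at least six turns in the same direction, for at least one cell on its boundary the cycle produces three or more consecutive turns in the same direction along that cell, and hence at least one $l_k$-edge with $k \geq 2$. Summing up, the total local cost of $C$ is bounded below by a positive multiple of $\burningTime$, whereas the replacement detour introduces only a constant number of edges, each of cost at most $\burningTime$. The savings from removing $C$ therefore strictly exceed the cost incurred by the detour.

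The main obstacle I expect is that removing $C$ may reclassify the first edge of $P_2$ by changing its predecessor (and symmetrically the last edge of $P_1$), which could in principle raise their local costs and cancel the savings. I would resolve this by selecting $v$ so that $C$ is an \emph{outermost} closed sub-walk — no edge of $P_1$ or $P_2$ shares a cell with an interior edge of $C$ — and by choosing the detour so that the edges incident to $v$ in $\wall_b'$ retain the same types as in $\wall_b$. With this careful local surgery, the cost change outside the detour region vanishes, and combined with the strict savings on the cycle this would yield the required contradiction.
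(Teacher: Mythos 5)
There is a genuine gap, and it sits in your ``strict cost decrease'' ingredient. You claim that any simple cycle $C$ in the border graph must contain an $l_k$-edge with $k\geq 2$ because it has ``at least six turns in the same direction.'' This is false when those six turns are \emph{right} turns: every edge following a right turn is of type $r$ (each right turn moves the path to a different left-hand cell, so no $l_k$-types accumulate), and an $r$-edge costs $\max(0,\burningTime+1-\ignitionCounter_e)$, which is $0$ whenever $\ignitionCounter_e\geq \burningTime+1$ --- permitted under the hypothesis $\ignitionCounter\leq 2\burningTime+1$. The paper's own \autoref{fig:intersectionExample} exhibits exactly this: a self-intersecting shortest local-cost path of total cost $0$. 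For such a clockwise loop your lower bound on the savings is $0$, while your compensating detour has positive cost, so the surgery makes the path \emph{worse} and no contradiction follows. The winding-compensation detour is itself problematic: to shift the \twist by $6$ locally you essentially must wrap around a hexagon, which reintroduces a repeated vertex and, in the counter-clockwise direction, costs on the order of $5\burningTime$ --- not the ``constant number of cheap edges'' your accounting assumes.

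The paper's proof avoids both issues by \emph{not} preserving the \twist after each individual excision. It simply deletes the segment between the first repeated vertex and its next occurrence, repeats until the path is simple, and then invokes \autoref{lem:windingNumber}: the final simple path automatically has \twist $1$, so no detour is ever needed. The cost is then controlled by a global amortized argument over all excisions: each excision either does not increase the cost, or increases it by at most $2\burningTime$ (from re-typing at most two junction edges) while raising the \twist by $6$; since the \twist must return to $1$ at the end, these are balanced by excisions that lower the \twist by $6$, each of which necessarily removes at least five $l_k$-edges with $k\geq 2$ and hence saves at least $5\burningTime$. That bookkeeping --- trading \twist changes against guaranteed $l_{\geq 2}$-edges in the \emph{left-turning} excisions only --- is the idea your proposal is missing.
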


\begin{proof}[Proof sketch]
Assume $\wall_b$ is a shortest local-cost $s$-$t$-path with \twist $1$ and at least one intersection.
Then, we prove a contradiction by constructing an intersection-free path $\wall_b'$ with fewer costs.

Let $\wall_b$ be given by the sequence $s, v_1, v_2, \ldots, v_i, \ldots, v_{n-1}, v_n, t$ of $n+2$ vertices in the border graph.
Let $i<j$ be the smallest indices such that $v_i = v_j$ holds.
By removing all vertices between $v_i$ and $v_{j+1}$, we obtain a new path of which we show that its cost is strictly less than the cost of $\wall_b$.

While removed edges can no longer contribute to the cost of the path, removing them can change the type and hence the local cost of the edges from $v$ up to the first unaffected $r$-edge after $(v,v_{j+1})$.
A detailed proof, which shows that the local-cost of $\wall_b'$ is strictly less than the local-cost of $\wall_b$, can be found in Appendix \ref{appendix:intersectionLemma}.

All in all, repeated removal of loops results in a simple $s$-$t$-path with fewer costs than $\wall_b$ which contradicts the assumption and completes the proof.
\end{proof}

\begin{lemma}
	\label{lem:revisits}
	For any shortest local-cost $s$-$t$-path \wall with \twist $1$, there exists a simple, shortest local-cost path without cell revisits that costs not more than \wall.
\end{lemma}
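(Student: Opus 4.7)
The plan is to show that any revisit in $\wall$ can be eliminated without increasing its local-cost; iterating this modification terminates (each step strictly decreases the number of revisits) at the desired simple path without revisits. By \autoref{lem:intersection}, $\wall$ is already simple, and the modifications I describe will preserve simplicity.

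Suppose $\wall$ revisits some cell $c$: let $v$ be the vertex at which the first visit of $c$ ends and $u'$ the vertex at which the next visit begins, and let $P$ denote the subpath of $\wall$ from $v$ to $u'$. By the definition of revisit, $P$ shares no edge with $c$'s boundary. Let $A$ be the arc of $c$'s boundary from $v$ to $u'$ consisting of the $g\ge 1$ edges of $c$ not already used by $\wall$, and let $\wall'$ be the walk obtained from $\wall$ by replacing $P$ by $A$. Then $\wall'$ is still a valid $s$-$t$-path with \twist $1$: since $A$ and $P$ both connect $v$ to $u'$, the exchange preserves the total turning angle, and because the modification is confined to the simple loop $A\cup P$, no new revisits or self-intersections can appear elsewhere. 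Moreover, in $\wall'$ the two visits of $c$ merge into a single consecutive visit of length $m_1+g+m_2$, where $m_i$ is the length of the $i$-th original visit of $c$.

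The central step is the cost comparison. Using the constraint $\ignitionCounter_e\le 2y+1$, a short telescoping argument shows that the local-cost of a single consecutive visit of length $m\in\{1,\ldots,6\}$ to a cell with resistance $\ignitionCounter_e$ equals the true fortification cost $\max(0,my+1-\ignitionCounter_e)$. A case analysis on $\ignitionCounter_e$ then bounds the local-cost change at $c$ by $\Delta_c\le gy$. For the other side, the subpath $P$ in the hex border graph from $v$ to $u'$ avoiding $c$'s edges uses at least $g$ edges, because $A$ and $P$ together bound a simply connected region disjoint from $c$'s interior (a planarity argument in the hex border graph); combined with a type-by-type accounting of the $r$- and $l_k$-cost contributions along $P$ under the cap $\ignitionCounter\le 2y+1$, one obtains
\[
c(P) \ge gy \ge \Delta_c,
\]
so that the modification does not increase the total local-cost.

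The hard part will be this last estimate: because the local-cost assignment depends on the turn history of the whole path, the types of edges along $P$ are not local quantities but are determined by the preceding turns in $\wall$. Establishing the uniform lower bound $c(P)\ge gy$ requires a careful classification of the possible turn sequences along $P$, together with per-edge lower bounds that exploit the uniformity of $\burningTime$ and the cap on $\ignitionCounter$. Once this bound is in place, iterating the construction yields a simple path without revisits whose local-cost is no greater than $c(\wall)$, completing the proof.
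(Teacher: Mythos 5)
Your high-level strategy is the same as the paper's: splice out the detour $P$ between two visits of a revisited cell $c$, reroute along $c$'s boundary, and argue the local-cost does not increase. But the one step that actually carries the lemma --- the inequality $c(P)\ge\Delta_c$ --- is exactly the step you defer as ``the hard part,'' and the route you sketch toward it does not work. First, the bound $\Delta_c\le gy$ is false: merging the two visits also changes the type of the first edge of the second visit from $r$ (cost as low as $0$) to some $l_k$ (cost up to $y$), so even inserting a single edge ($g=1$) can raise the cost at $c$ by $2y$; this is precisely the $2y$ the paper accounts for in both of its cases. Second, $c(P)\ge gy$ cannot follow from ``$P$ uses at least $g$ edges'' together with per-edge lower bounds, because under the hypothesis $\ignitionCounter_e\le 2\burningTime+1$ both $r$-edges and $l_1$-edges can have local-cost $0$ (take $\ignitionCounter_e=2y+1$). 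The only edges with a guaranteed positive cost are those of type $l_k$ with $k\ge 2$, so the lower bound on $c(P)$ must come from counting \emph{those}, which in turn requires the winding argument: since the path is intersection-free (\autoref{lem:intersection}) and must loop around to return to $c$, the removed detour necessarily contains enough consecutive-left-turn edges. That counting argument is the substance of the lemma, and your proposal does not contain it.

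The paper sidesteps the generality that makes your version hard by looking at the \emph{first} edge $e$ of the path that lies along a revisited cell. Combined with intersection-freeness, this pins the configuration down to exactly two cases (return at edge $2$ with edge $1$ free, or return at edge $3$), each with an explicit reroute of added cost at most $2y$, paid for by at least two $l_{\ge 2}$-edges in the removed detour. Your generic arc $A$ of ``the $g$ unused edges'' also leaves the preservation of simplicity unargued (interior vertices of $A$ could a priori lie on $\wall\setminus P$), whereas in the paper's two concrete cases the inserted edges connect vertices that are already consecutive endpoints of edges retained on the path. To repair your write-up you would either need to prove the $l_{\ge 2}$-counting bound for an arbitrary revisit configuration, or restrict to the first revisited edge as the paper does.
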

\begin{proof}
	Assume $\wall_b$ is a shortest local-cost s-t-path with a \twist~1 where at least one cell on the left-hand of $\wall_b$ is revisited.
	By \autoref{lem:intersection}, $\wall_b$ is free of intersections.
	Then, we prove a contradiction by constructing a cell-revisit-free path $\wall_b'$ with local-costs no more than that of $\wall_b$.
	
	Let $e$ be the first edge on the path along a revisited cell $c$.
	Let the edges of $c$ be numbered counter clockwise from $0$ to $5$, where $e$ is edge $0$.
	Then, the situation at $e$ can be restricted to the following two cases, also illustrated in \autoref{fig:revisitCases}.
	
	\begin{enumerate}
		\item The last edge along $c$ on $\wall_b$ is edge $2$ and $1$ does not lie on $\wall_b$.
		\item The last edge along $c$ on $\wall_b$ is edge $3$ and at least one of the edges $1$ and $2$ does not lie on $\wall_b$.
	\end{enumerate}	

	Note that $\wall_b$ can neither include edge~$4$ nor $5$.
	If it included edge~$5$, $\wall_b$ would not be intersection free.
	If it included edge~$4$ but not $5$, the edge preceding $e$ on $\wall_b$ would also lie along a revisited cell and we assumed $e$ to be the first such edge on $\wall_b$.
%
%
%
%
	
	\begin{figure}[tb]
		\centering
		\includegraphics[scale=1]{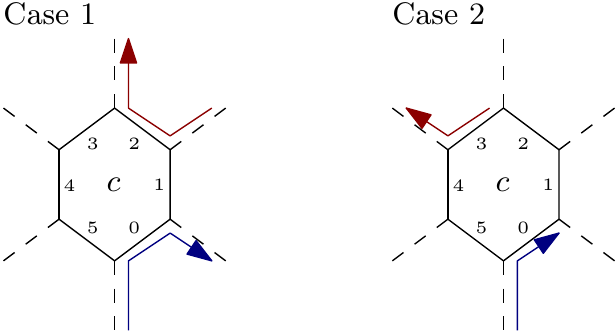}
		\caption{Path $\wall_b$ follows the blue edges upon its first visit of $c$ and the red edges on its second visit. Cases are equivalent for rotation.}
		\label{fig:revisitCases}
	\end{figure}
	
	We can construct $\wall'_b$ from $\wall_b$ by removing cell visits as follows.
	In case~$1$, we can replace all edges on $\wall_b$ after edge~$0$ and before edge~$2$ by edge~$1$.
	This adds at most cost~$2y$ ($y$ for the new edge~$1$, and another $y$ for the change of type edge~$2$ to $l_2$).
	In case~$2$, we replace all edges on $\wall_b$ between edge~$0$ and edge~$3$ including these two edges by edge~$4$ and $5$.
	This also adds at most cost~$2y$ ($y$ for edge~$4$ and $5$ each).
	However, as $\wall_b$ is free of intersection by \autoref{lem:intersection}, we know for both cases that the removed part includes enough $l_k$ edges with $k>1$ to counter the cost of adding the new edges to the path.
\end{proof}

Finally, we describe how to compute such a shortest local-cost path with a Dijkstra-inspired shortest-path algorithm tracking the \twist of the path:
We use a priority queue of tuples $(c,v,p,e,w)$, where $c$ is the minimum known local-cost of a path to a vertex $v$ via a predecessor $p$, where the last edge $(p,v)$ on the path has edge type $e$ and the whole path has \twist $w$.
Due to regularity of the lattice, we know that: the vertex degree and hence the number of possible predecessors is constant; the number of edge types is constant.
Moreover, $w$ 
is limited by the size of our grid:
A non-intersecting path with a very high \twist roughly forms a spiral, where the \twist corresponds to the number of spiralling rounds;
the maximum size of such a spiral is limited by the width $w$ and height $h$ of the grid.
Thus, it suffices to consider tuples with $|w| \leq 6 \min(w,h) = \bigO{\sqrt{n}}$.

Altogether, our priority queue contains at most \bigO{n \sqrt n} many items, which results in a runtime of \bigO{n \sqrt{n} \log n } to find a simple, shortest local-cost path from a specific starting vertex $s$ on the lower boundary to any vertex $t$ on the upper boundary.
To find the optimal separating path \wall, we have to compare shortest paths for all pairs of $s$ and $t$.
We can do this in a single run of the algorithm by initialising our priority queue with the outgoing edges of all possible $s$. 
We can terminate as soon as the minimal entry in our priority queue is a tuple where $v$ is one of the vertices along the upper boundary of our grid and $w$ is $1$.
A pseudocode description of this algorithm can be found in Appendix \ref{appendix:IndividualAlgorithm}.

\begin{theorem}
Let \rectangle be a rectangular domain, where all cells have identical \burningTime-values and $0 < \ignitionCounter[c,0] \leq 2  \burningTime[c,0] + 1$ holds.
All cells at the right boundary of \rectangle are on fire and all cells along the left boundary have to be protected.

Then, we can compute a separating path of minimum fortification cost in time \bigO{n \sqrt{n} \log n}.
\end{theorem}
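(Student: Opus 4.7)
The plan is to reduce the optimization problem to finding a shortest local-cost path in the border graph of \rectangle and then solve that path problem with a modified Dijkstra routine that also tracks the \twist and edge type. I would proceed in four main steps.

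First, I would justify the reduction. By the observation from the beginning of \autoref{subsection:villageProtection}, there exists a minimum-cost separating path \wall such that every cell to the right of \wall is \dead at the end of the fire simulation. This allows me to rewrite the fortification cost of each cell $c \in \wall$ as $\max(0, Y_r(c) + 1 - \ignitionCounter[c,0])$, where $Y_r(c)$ is the sum of \burningTime-values of the neighbours of $c$ lying to the right. Replacing \wall by the corresponding path $\wall_b$ in the border graph and distributing this cost over the right-hand edges of $c$ according to the edge-type schedule $r, l_1, l_2, l_3, l_4$ yields exactly the local-cost function $c(e,\text{type})$ defined in the excerpt. Hence, the fortification cost of \wall equals the local-cost of $\wall_b$ \emph{provided that} $\wall_b$ has \twist $1$, is simple, and does not revisit any cell on its left side.

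Second, I would invoke the structural lemmas already proved. By \autoref{lem:windingNumber}, \autoref{lem:intersection}, and \autoref{lem:revisits}, among all shortest local-cost $s$-$t$-paths (over choices of $s$ on the lower and $t$ on the upper boundary) with \twist $1$, at least one is simple and free of cell revisits. For such a path the local-cost coincides with the true fortification cost of the induced \wall, so finding a minimum-fortification separating path reduces to finding a shortest local-cost path with \twist exactly $1$ from the lower to the upper boundary of the border graph.

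Third, I would describe the algorithm. Run a Dijkstra variant whose states are tuples $(c,v,p,e,w)$ as in the excerpt: the local-cost $c$ of a best known path to vertex $v$ arriving from predecessor $p$ via an edge of type $e$ with path \twist $w$. Relaxation computes the new edge type from the turn between $(p,v)$ and the outgoing edge, updates $w$ by $+1$ on a right turn and $-1$ on a left turn, and adds the corresponding local-cost. Initialize the queue with all starting edges incident to lower-boundary vertices, each with \twist $1$ (since the first edge counts as an $r$-edge). Terminate as soon as the extracted minimum is a state whose $v$ lies on the upper boundary and whose $w$ equals $1$.

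Fourth, I would analyse the running time. The number of distinct $(v,p,e)$ triples is $O(n)$ by the constant vertex degree and constant number of edge types in the hexagonal lattice. For the \twist parameter, I would argue (as in the excerpt) that any candidate shortest path with \twist $w$ must, if simple, wind around like a spiral whose radius grows with $|w|$; since the spiral must fit inside a grid of width and height at most $\sqrt{n}$, we only need to keep states with $|w| = O(\sqrt{n})$. Any state exceeding this bound cannot extend to a \twist-$1$ boundary-to-boundary simple path and may be discarded. The priority queue therefore holds $O(n\sqrt{n})$ items, giving a total running time of $O(n\sqrt{n}\log n)$.

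The main obstacle I expect is making the \twist truncation rigorous: one must argue that pruning states with $|w| > 6\min(\text{width},\text{height})$ cannot destroy the optimum. The cleanest way is to show that any optimal \twist-$1$ path attainable in the algorithm can be decomposed into segments each of bounded \twist, so that the optimal value is reached before any state with large $|w|$ is ever extracted; this uses simplicity from \autoref{lem:intersection} and the spiral geometry of the hexagonal grid. Once this truncation is in place, correctness of the Dijkstra variant follows from the non-negativity of local-costs and the fact that $(v,p,e,w)$ uniquely determines the relaxation behaviour of a prefix with respect to any extension.
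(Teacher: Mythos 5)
Your proposal follows essentially the same route as the paper: reduce to a shortest local-cost path in the border graph, use \autoref{lem:windingNumber}, \autoref{lem:intersection}, and \autoref{lem:revisits} to show that a shortest local-cost path with \twist~$1$ can be taken to be simple and revisit-free (so its local-cost equals the true fortification cost), and then run a Dijkstra variant on states $(c,v,p,e,w)$ with the \twist bounded by $\bigO{\sqrt{n}}$ via the spiral argument, yielding \bigO{n \sqrt{n} \log n}. The reduction, the key lemmas, the algorithm, and the complexity analysis all match the paper's proof, so the proposal is correct and not a genuinely different approach.
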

For arbitrary values of \ignitionCounter and \burningTime, it is still open whether the optimization problem can be solved in polynomial time.
While the definition of local-costs can be adjusted, problems (1) and (2) remain.

\subsection{An NP-complete problem}
\label{subsection:npCompleteness}
In this section, we no longer assume that cells along the right boundary of \rectangle are on fire, while cells along the left edge have to be protected.
Instead, we consider two sets of cells \sourceCells, \targetCells, where the fire is allowed to start from cells in \sourceCells to burn cells in \targetCells.
We consider the following decision problem:
Are there $m$ cells in \sourceCells which, when put on fire, will eventually ignite all cells in \targetCells?

We prove this problem to be NP-complete by reduction from planar vertex cover.
The planar vertex cover problem is as follows:
Given a planar graph $G$, a vertex cover for $G$ is a subset of vertices that contains at least one endpoint of every edge.
This problem was proven to be NP-complete, even for planar graphs with maximum vertex degree three \cite{article:garey1977rectilinear}.%
\DK{For a planar graph there exists a planar grid embedding iff and only if the graph has a maximum degree of four.}

\newcommand{\rectiGrid}{\ensuremath{G_\square}\xspace}
Given a planar graph $G$ with $n$ vertices, we have to show how to obtain an instance of the fire expansion problem in polynomial time.
For simplicity, we assume that, instead of the basic hexagonal mode, we can use a basic grid model where the same rules apply only that a rectilinear grid is used instead of the hexagonal one.
Consider the \emph{rectilinear grid}, which is the infinite plane graph with a vertex at every positive integer coordinate and an edge between every pair of vertices at unit distance.
In a first step, we compute a planar grid embedding \rectiGrid of $G$ into the rectilinear grid such that the following holds:
Disjoint vertices of $G$ are mapped to disjoint integer coordinates; edges of $G$ are mapped to rectilinear paths in the grid such that no two paths have a point in common, except, possibly, for the endpoints.
The size of \rectiGrid is polynomial in $n$ and can be computed in polynomial time; see \cite{book:eiglsperger1999orthogonal}.
In a second step, we scale \rectiGrid by a factor of two: a vertex at coordinates $(a, b)$ is mapped to coordinates $(2a, 2b)$; edges of \rectiGrid are stretched accordingly.
This introduces \emph{buffer coordinates} between different edges in the embedding of \rectiGrid.
Finally, we place a cell at every integer coordinate spanned by \rectiGrid and set the $(\ignitionCounter, \burningTime)$-values of the cell $c$ as follows:
\begin{itemize}
\item If $c$ corresponds to a vertex of \rectiGrid (vertex-cell) set the weights to $(n, 1)$;
\item if $c$ corresponds to an edge of \rectiGrid (edge-cell) set the weights to $(1, 1)$;
\item set the weights of all remaining cells to $(0, 0)$.
\end{itemize}
Add all vertex-cells to \sourceCells and all edge-cells incident to a vertex-cell to \targetCells.
The size of the resulting instance and the construction time are polynomial in $n$.

It remains to prove that this instance has $m$ cells that ignite all cells in \targetCells iff $G$ has a vertex cover of size $m$.
On the one hand, if $G$ has a vertex cover $\mathcal{C}$ of size $m$, the corresponding vertex-cells can be chosen to put on fire.
Due to the choice of values, they will ignite all adjacent edge-cells.
Since $\mathcal{C}$ is a vertex cover, all edge-cells will burn and therefore all cells in \targetCells.
On the other hand, assume there is a subset $\mathcal{S}\subset \sourceCells$ of size $m$ that will eventually ignite all cells of \targetCells.
Due to the construction, all cells of $\mathcal{S}$ are vertex-cells, so only vertex-cells are put on fire.
Observe that due to the choice of weights, a burning edge-cell can never ignite a neighbouring vertex-cell.
Moreover, due to the buffer coordinates, any two edge-cells that belong to different edges of \rectiGrid are separated by at least one cell of weight $(0,0)$.
Consequently, every edge-cell in \targetCells must be either ignited by its direct neighbouring vertex-cell or by a fire reaching it from the direct neighbouring edge-cell, emanating from a different vertex-cell.
Since for every edge in $G$ there is an edge-cell in \targetCells, which is ignited via one of the adjacent vertex-cells of $\mathcal{S}$, the vertices of $G$ corresponding to the vertex-cells in $\mathcal{S}$ constitute a vertex cover of size $m$ for $G$. 

Certainly the problem is in NP, since the subset-certificate of cells $\mathcal{S}$ can be verified in polynomial time via standard simulation.
Thus, we obtain the following theorem.
\begin{theorem}
\label{theorem:npCompleteness}
Given an instance of the basic hexagonal model, $m\in\mathbb{N}$ and two finite sets of cells $\mathcal{F}, \mathcal{T}$.
It is NP-complete to decide whether there is a subset $\mathcal{S} \subseteq \mathcal{F}$ with $\abs{S} \leq m$ such that putting all cells of $\mathcal{S}$ on fire will eventually ignite all cells of $\mathcal{T}$.
\end{theorem}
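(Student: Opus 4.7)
The plan is to establish NP-completeness in two parts: membership in NP, and NP-hardness via a polynomial reduction from planar vertex cover restricted to graphs of maximum degree three, which is NP-complete by \cite{article:garey1977rectilinear}.

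Membership in NP is routine: given a candidate subset $\mathcal{S}\subseteq\sourceCells$ with $|\mathcal{S}|\le m$, one ignites the cells of $\mathcal{S}$ and runs the deterministic simulation (for instance via the priority-queue variant of \autoref{subsection:airplane}), then checks whether every cell of $\targetCells$ eventually burns. Because the dynamics are monotone and every cell changes state only a bounded number of times, the verification runs in polynomial time in the instance size.

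For hardness, I would encode a planar graph $G$ of maximum degree three in the model as follows. First compute a planar orthogonal grid drawing $\rectiGrid$ of $G$ of polynomial size \cite{book:eiglsperger1999orthogonal}, mapping vertices of $G$ to distinct grid points and edges of $G$ to vertex-disjoint rectilinear grid paths. Scale the drawing by a factor of two to introduce at least one buffer coordinate between distinct edge-paths. Now place a cell at every integer coordinate of the scaled drawing and assign the pair $(\ignitionCounter,\burningTime)$ as $(n,1)$ to each vertex-cell, $(1,1)$ to each edge-cell, and $(0,0)$ to every remaining cell. Take $\sourceCells$ to be the set of vertex-cells and $\targetCells$ to be the set of edge-cells incident to a vertex-cell. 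The construction is polynomial in $n$. For the hexagonal model, an analogous layout is obtained by a constant-factor deformation of $\rectiGrid$ (each rectilinear path becomes a staircase of hex-cells).

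The main obstacle, and the crux of the proof, is the correctness of the reduction. The easy direction is: if $G$ has a vertex cover $\mathcal{C}$ of size $m$, igniting the corresponding vertex-cells propagates fire along every covered edge-path, because edge-cells have $\ignitionCounter=\burningTime=1$ and so are ignited by any single burning neighbour. This reaches every cell of $\targetCells$. The harder direction requires ruling out unintended shortcuts. Because a vertex-cell starts with $\ignitionCounter=n$, while the total $\burningTime$-contribution of its constantly many grid-neighbours is bounded by a constant smaller than $n$, no coalition of burning edge-cells can ever ignite a vertex-cell; hence any successful $\mathcal{S}\subseteq\sourceCells$ consists entirely of vertex-cells actively set on fire at time $0$. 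The buffer cells of weight $(0,0)$ cannot burn, so distinct edge-paths are fire-isolated from one another. Consequently every edge-cell of $\targetCells$ adjacent to a vertex-cell $v$ can be ignited only via a vertex-cell in $\mathcal{S}$ located at one endpoint of the corresponding edge of $G$. Since every edge-cell in $\targetCells$ burns, every edge of $G$ has at least one endpoint in $\mathcal{S}$, so the vertices corresponding to $\mathcal{S}$ form a vertex cover of $G$ of size at most $m$, completing the equivalence.
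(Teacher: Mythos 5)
Your proposal is correct and follows essentially the same route as the paper's proof: the identical reduction from planar vertex cover on degree-three graphs via an orthogonal grid drawing scaled by two, the same $(n,1)/(1,1)/(0,0)$ weight assignment, and the same correctness argument (single burning neighbours suffice to ignite edge-cells, vertex-cells are never ignitable by their constantly many neighbours, and buffer cells isolate distinct edge-paths). The only minor difference is that you sketch an explicit hexagonal realization of the layout, whereas the paper simply assumes the rectilinear grid variant of the model for simplicity.
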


\todo{Is our problem FPT in solution size? Planar vertex-cover is.}
Note that the hardness proof requires \targetCells to be possibly of size linear in \abs{\sourceCells}.
Restricting \targetCells to a single cell, we obtain a problem that might very well be easier to solve.
We do not know, whether this restricted problem is still NP-hard and leave this open.
Definitely, the restricted problem becomes undecidable in simple variants of the basic model, as the following section shows.

\section{Variants of Basic Hexagonal Model}
\label{section:modelVariants}
Our basic hexagonal model can be modified in many ways to model different circumstances, environments or other known problems.

Certainly, the basic model can also be defined for other types of lattices, like the rectilinear square lattice.
Thus, the model covers grid versions of firefighting problems, \eg \cite{article:develin2007fire}, as a special case:
Set $\ignitionCounter[c, 0] = 1$ and $\burningTime[c,0] = \infty$ for each cell $c$ and model the blocking of $c$ at time $t$ via $\ignitionCounter[c, t] = \infty$.
In general, the values for \ignitionCounter and \burningTime could be replaced by positive (not necessarily monotone) functions of the simulation time.

Another natural generalization is to stack several layers of cells on top of each other.
For every cell, the \ignitionCounter- and \burningTime-values could be defined for each layer individually.
These extensions allow to model fire expansion in different heights, such as crown or ground fires.

Environmental factors can also be modelled by slightly adjusting the transition rules.
For example, wind can be modelled by letting a burning cell decrease its neighbours' \ignitionCounter-values by different amounts per round, depending on the direction in which the neighbour lies and in which direction the wind blows.
Cooling down or regrowth of greenery can be modelled by having cells regain their \ignitionCounter- or \burningTime-values if no neighbouring cells are burning.
However, even given regrowth, seemingly similar models, like Conway's Game of Life, remain distinct.

Still, these variants can lead to surprisingly complex problems:
With only three layers of cells in an infinite lattice with a constant description complexity, 
the question if putting fire to cell $c$ will eventually ignite cell $c'$ becomes undecidable.
The problem remains undecidable, even with a single layer, if we allow cells to recover their initial \ignitionCounter-values over time; see Appendix \ref{appendix:mayorsVilla} for details.

\begin{theorem}
\label{theorem:undecidability}
In the version of our firefighting model, where regeneration or at least three layers of cells are allowed, there is no algorithm that can decide every instance of the following problem:

Given a lattice with a finite description, a set of cells $\mathcal{F}$ and a single cell $v$.
Each cell of $\mathcal{F}$ is set on fire at $t=0$.
Will cell $v$ eventually catch fire?
\end{theorem}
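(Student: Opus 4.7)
The plan is to reduce from a known undecidable problem, namely the halting problem for Turing machines (or, equivalently, for two-counter Minsky machines). The goal is to encode an arbitrary Turing machine $M$ and input $w$ into a finitely-describable lattice together with a set $\mathcal{F}$ of initially-burning cells and a distinguished cell $v$, so that $v$ eventually catches fire if and only if $M$ halts on $w$. Since the halting problem is undecidable, the fire-reachability problem must be as well.

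First, I would design simple \emph{wire} gadgets --- linear strips of cells with carefully matched \ignitionCounter- and \burningTime-values --- that deterministically propagate a burning front along a predetermined route at a known speed. From these I would build signal-crossing, fan-out, delay, and Boolean-gate gadgets exploiting the local hexagonal neighbourhood structure, and finally larger \emph{memory} and \emph{tape} gadgets. Composing these, I would simulate the transition function of $M$ step by step, terminating in a designated ``halt'' cell which is identified with $v$. Because $M$, $w$ and the simulation schema are all finite, the resulting lattice admits a finite description.

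The critical difficulty --- and the reason the extra assumption of three layers or regeneration is necessary --- is that the basic single-layer model is monotone: an \ignitionCounter-value only decreases and a \dead cell never comes back, so a single layer cannot store a bit that is overwritten arbitrarily often. In the regeneration variant, the ability of a cell to recover its initial \ignitionCounter-value directly yields a resettable latch, so rewritable memory can be constructed within one layer. In the three-layer variant, I would use the layers to separate functions: for example, one layer as the data track, one as a clock/timing track, and one as a reset-signal track, with inter-layer ignition rules tuned so that the dynamics of one layer can faithfully rewrite cells in another. In both cases the goal is the same --- building a universal computing substrate out of fire-propagation primitives.

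The main obstacle is timing and non-interference. Fire fronts in our model travel at rates that depend on the local \ignitionCounter- and \burningTime-values, so every gadget must be calibrated so that the several input signals to a gate arrive synchronously, so that no gate fires before all of its inputs are ready, and so that parallel signals in neighbouring parts of the layout cannot spuriously ignite one another through the hexagonal neighbourhood. Packing all of these gadgets into a single (three-layer) planar embedding that remains finitely describable while simulating a universal machine is the most delicate part of the construction; the detailed gadgetry is deferred to Appendix~\ref{appendix:mayorsVilla}.
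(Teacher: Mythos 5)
You correctly pick the halting problem for two-register (Minsky) machines as the source of undecidability, and your wire and crossing gadgets match the paper's toolkit. But the heart of your construction --- rewritable memory built from ``resettable latches'' and a ``tape gadget'' --- is not how the paper proceeds, and for the three-layer variant it cannot work. Adding layers does not change the monotonicity of the model: in every layer an \ignitionCounter-value only decreases and a cell whose \burningTime-value reaches $0$ is \dead forever, so no tuning of inter-layer ignition rules lets one layer ``faithfully rewrite'' cells of another. The paper uses the three layers (or regeneration) for exactly one purpose, the crossing gadget, where two wires must pass each other in a planar layout without igniting one another; everything else in the construction is single-use.

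The missing idea is how to store an unbounded amount of state in a write-once medium. The paper does it positionally: the first quadrant is tiled with identical box gadgets indexed by $(r_1,r_2)$, each containing one instruction gadget per instruction of the register machine, and a configuration $(d_1,d_2,I_i)$ is represented by the fire front currently sitting at the $i$-th instruction gadget of the box with coordinates $(d_1,d_2)$. A computation step moves the fire to a neighbouring box (increment/decrement) or within the same box, so every configuration --- and hence every cell --- is visited at most once and no reset is ever needed; because the tiling is periodic, the infinite lattice still has a finite description, something your unbounded tape layout would also need to justify. If you insist on a memory-based architecture you would have to restrict to the regeneration variant (where \dead cells genuinely can come back to life) and would still owe concrete gate, latch and synchronization gadgets; as written, the proposal asserts these without showing they exist.
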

\todo{Restrict $\mathcal{F}$ to a single cell in this theorem}

\section{Conclusion}
\label{section:conclusion}
In this paper, we present a new model for firefighting problems together with some solutions and hardness results.
The basic hexagonal model is simple to understand and generalizes a discrete model that has been introduced before.
It allows to incorporate additional parameters to model weather conditions or crown and ground fires.
These extensions could be applied to single cells or the entire lattice.

Obvious questions are how to improve on the results and to widen their scopes.
We did not address any \emph{dynamic} aspects of firefighting, yet.
How does the fire's propagation change, when single cells are fortified?
Moreover, fortifying a path of cells takes time to refill an aircraft's water tanks and fly back and forth.
Can this task be accomplished before the path is reached by the fire?
Research on seemingly simple dynamic geometric problems \cite{article:kim2019geometric,article:klein2019firefighter} seem to indicate that one should not hope for provably optimal results in the basic hexagonal model, but strive for good approximations.

\paragraph{Acknowledgements.}
We thank all anonymous reviewers for their helpful comments and suggestions. 


{
\bibliographystyle{splncs04}
\tiny
\bibliography{HexGrid}
}

\newpage

\appendix
\section{Appendix}

\subsection{An algorithm for fast fire propagation}
\label{appendix:fastSimulationAlgorithm}
\begin{theorem}
\label{theorem:fastSimulationAlgorithm}
Given a bounded domain $\rectangle$ of $n$ hexagonal cells and a subset of cells $\mathcal{F} \subseteq \rectangle$, which are set on fire at time $t=0$. Let $F(t)$ denote the set of alive cells adjacent to at least one burning cell at time $t$.
The result of the fire propagation in the grid can be computed in \bigO{D_{end} ~\cdot~ \log F_{max}} = \bigO{n \cdot \log n} time, where $F_{max} := \max_{t\geq 0} \abs{F(t)}$ and $D_{end}$ denotes the number of cells that are dead when no burning cell remains.
\end{theorem}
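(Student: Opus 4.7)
The plan is to design an event-driven, Dijkstra-inspired simulation that skips directly from one ignition event to the next instead of advancing time step by step. We maintain a priority queue $\queue$ whose entries are alive cells currently possessing at least one burning neighbour, each keyed by its \emph{predicted ignition time} $\predictedIgnitionTime[c]$, \ie the time at which $c$ would become burning if no further ignitions were to occur around it. The simulation initialises $\queue$ by inserting every alive neighbour of $\mathcal{F}$ together with its initial prediction, then repeatedly extracts the entry $(c^*, t^*)$ of minimum key, marks $c^*$ as burning at time $t^*$ with fuel $\burningTime[c^*, 0]$, and updates the predictions of the (at most six) still-alive neighbours of $c^*$, inserting those not yet in $\queue$.

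First I would verify that $\predictedIgnitionTime[c]$ can be computed in $O(1)$ time. Given the current time $t_0$, the remaining ignition counter $x_0 = \ignitionCounter[c,t_0]$, and the burning neighbours of $c$ with remaining burning times $y_1 \leq y_2 \leq \cdots \leq y_b$, the counter drops by $b$ per step for the first $y_1$ steps, then by $b-1$ per step for $y_2 - y_1$ more steps, and so on, so one finds in closed form the phase in which the counter first reaches zero. Because $b \leq 6$, this computation is constant-time and independent of the magnitudes of the $x$- and $y$-values.

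The correctness argument rests on the invariant that, at every moment during the algorithm, the key of each cell in $\queue$ is exactly the time it would ignite assuming no further ignitions in its neighbourhood. The crux is then that when $c^*$ with key $t^*$ is extracted, no ignition anywhere in the grid can occur strictly before $t^*$: any such earlier ignition would have to be of a cell already in $\queue$, contradicting minimality. Hence $c^*$ really ignites at $t^*$, and updating the predictions of its alive neighbours restores the invariant. Note that the scheduled death of a currently burning cell does not require its own event, because the $O(1)$ prediction formula already incorporates it---only \emph{new} burning neighbours can perturb a prediction, and these are precisely what we handle upon extraction.

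For the running-time analysis, each cell is extracted at most once (exactly when it turns from alive to burning), so there are at most $D_{end}$ extract-min operations, and each of them triggers $O(1)$ insert or decrease-key operations on $\queue$. A separate counting shows that the total number of insertions is also $\bigO{D_{end}}$, because every inserted cell is either eventually ignited or is among the at-most-six alive neighbours of some cell that ignites. Since $\queue$ at time $t$ is contained in $F(t)$, its size never exceeds $F_{max}$, so a binary heap processes each operation in $\bigO{\log F_{max}}$ time, yielding the claimed $\bigO{D_{end} \log F_{max}} = \bigO{n \log n}$ bound. The main obstacle will be a careful proof of the queue invariant, in particular ruling out stale keys silently becoming incorrect; but this reduces to the minimum-key argument above.
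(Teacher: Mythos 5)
Your proposal is correct and follows essentially the same route as the paper: a Dijkstra-style priority queue keyed by predicted ignition times computed in $O(1)$ from already-ignited neighbours only, justified by the observation that neighbours igniting no earlier than $c$ cannot affect $\ignitionTime[c]$, with the identical $\bigO{D_{end}\log F_{max}}$ accounting. The paper merely formalises your queue invariant as a four-part loop invariant and isolates the ``later neighbours are irrelevant'' step as a separate lemma.
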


Let $\mathcal{N}_c$ be the set of neighbours of a cell $c$.
Let \emph{ignition time} \ignitionTime[c] denote the moment in time when $c$ ignites, \ie the minimum $t$ for which $\ignitionCounter[c,t]= 0$ holds, or $\infty$ if no such $t$ exists.
More precisely, if such a value for $t$ exists, then
\[
\ignitionTime[c] =
\min \left\lbrace t\in \mathbb{N} ~\left|~ \ignitionCounter[c,0] \leq \sum\limits_{n \in \mathcal{N}_c} \max (0, \min(t-\ignitionTime[n], \burningTime[n,0]))\right.\right\rbrace.
\]
Having computing \ignitionTime[c] for all $c$ of the grid, it is possible to return the result of the fire propagation at any time $t$.
All cells $c$ with $\ignitionTime[c]=\infty$ never ignite.

\begin{lemma}
Given $\ignitionTime[n]$ for all $n \in \mathcal{N}_c$, $\ignitionTime[c]$ can be calculated in constant time.
\end{lemma}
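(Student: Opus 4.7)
The plan is to exploit the fact that each cell has at most six neighbours, so the right-hand side of the defining equation for $\ignitionTime[c]$ is the sum of a constant number of very simple functions of $t$.

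First, I would observe that for a fixed neighbour $n$, the contribution
\[
f_n(t) \; := \; \max\!\bigl(0,\, \min(t-\ignitionTime[n],\, \burningTime[n,0])\bigr)
\]
is a piecewise linear, nondecreasing integer-valued function with exactly two breakpoints: it is $0$ for $t \le \ignitionTime[n]$, increases with slope $1$ for $\ignitionTime[n] \le t \le \ignitionTime[n] + \burningTime[n,0]$, and is constant at $\burningTime[n,0]$ thereafter. Neighbours with $\ignitionTime[n] = \infty$ contribute the zero function and can be ignored. Since $|\mathcal{N}_c|\le 6$, the total contribution $F(t) := \sum_{n\in\mathcal{N}_c} f_n(t)$ is piecewise linear and nondecreasing with at most $12$ breakpoints.

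Next I would sort this constant-size list of breakpoints in ascending order (a $O(1)$ operation, since the list has length bounded by a constant). Each breakpoint is labelled by whether it marks the moment a neighbour starts contributing (slope $+1$) or stops contributing (slope $-1$). Sweeping through the sorted breakpoints, I maintain the current value $v$ of $F$ and its current integer slope $s$. On the segment between two successive breakpoints $t_j < t_{j+1}$, $F$ grows linearly from $v$ with slope $s$. If $s>0$ and $v + s\,(t_{j+1} - t_j) \ge \ignitionCounter[c,0]$, then the threshold is first met inside this segment at the integer time $t_j + \lceil (\ignitionCounter[c,0]-v)/s \rceil$, which I return as $\ignitionTime[c]$. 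Otherwise I update $v$ and $s$ at $t_{j+1}$ and continue. If the sweep finishes without the threshold being crossed, then the eventual constant value $F(\infty) = \sum_n \burningTime[n,0]$ is smaller than $\ignitionCounter[c,0]$, and I return $\ignitionTime[c] = \infty$.

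Since the number of breakpoints, the number of slope changes, and the arithmetic per segment are all bounded by a constant depending only on the lattice degree, the entire procedure runs in $O(1)$ time, which proves the claim. The only subtlety I expect is the rounding: the definition of $\ignitionTime[c]$ restricts $t$ to $\mathbb{N}$, so on the final segment the first integer time at which $F(t)\ge \ignitionCounter[c,0]$ must be computed correctly via a ceiling; the monotonicity of $F$ on that segment guarantees this is indeed the minimum witness and hence the correct value of $\ignitionTime[c]$.
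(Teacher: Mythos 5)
Your proposal is correct and follows essentially the same approach as the paper's proof: both sweep over the constant number of interval endpoints $\ignitionTime[n]$ and $\ignitionTime[n]+\burningTime[n,0]$ in ascending order, and once the segment containing the threshold crossing is located, use the constant slope on that segment to compute the exact ignition time (the paper's $t^-$/$t^+$ bracketing is your ceiling computation). Your write-up merely makes the piecewise-linear structure and the rounding explicit; no substantive difference.
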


\begin{proof}
The time during which a neighbour $n$ burns corresponds to an interval $[\ignitionTime[n], \ignitionTime[n] + \burningTime[n,0]]$, as illustrated in \autoref{figure:predictIgnition}.
As the number of neighbours is constant (at most $6$), we can sweep over the constant number of start and endpoints of these intervals in ascending order and calculate $\ignitionCounter[c,t]$ at each such point, until it reaches $0$ (at some time $t^+$).
The last interval start or endpoint before that is $t^-$.
Between $t^+$ and $t^-$, $\ignitionCounter[c]$ decreases by a constant amount each round, so we can directly calculate $\ignitionTime[c]$.
If we reach the last interval endpoint without finding $t^+$, then $\ignitionTime[c] = \infty$.
\end{proof}

Now, let $N \subseteq \mathcal{N}$ be a subset of the neighbours of $c$ for which \ignitionTime is known.
We define the \emph{predicted ignition time} \predictedIgnitionTime[N,c] to be the time at which $c$ would ignite if the cells in $N$ were the only neighbours of $c$.
That is to say, we can define and compute $\predictedIgnitionTime[N,c]$ in the same way as $\ignitionTime[c]$, just with a subset of $\mathcal{N}_c$.
As we are working with a subset of the actual neighbours of $c$, it is clear, that $\predictedIgnitionTime[N,c] \geq \ignitionTime[c]$ for any $N \subseteq \mathcal{N}$.
Our algorithm is mostly based on the following intuitive observation:
A neighbour $n$ of $c$ that ignites after $c$ does not actually affect \ignitionTime[c].

\begin{lemma}
	\label{lemma:predictedIgnition}
 	Let $c$ be a cell and $\mathcal{N}_c$ its neighbours.
 	If $N = \{n \in \mathcal{N}_c \mid \ignitionTime[n] < \ignitionTime[c]\}$, then $\predictedIgnitionTime[N,c] = \ignitionTime[c]$ holds.
\end{lemma}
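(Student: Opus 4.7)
The plan is to prove the equality by establishing the two inequalities $\predictedIgnitionTime[N,c] \geq \ignitionTime[c]$ and $\predictedIgnitionTime[N,c] \leq \ignitionTime[c]$ separately. The first, easy direction follows from monotonicity of the defining sum in the set of neighbours: since $N \subseteq \mathcal{N}_c$ and every summand $\max(0, \min(t-\ignitionTime[n], \burningTime[n,0]))$ is non-negative, the $N$-sum is pointwise bounded above by the $\mathcal{N}_c$-sum, so any $t$ making the $N$-sum reach $\ignitionCounter[c,0]$ already makes the full sum reach it; the minimum over such $t$ can therefore only grow when we shrink the neighbour set.

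For the reverse inequality, the key idea is to evaluate both sums at the specific time $t = \ignitionTime[c]$. By the definition of $N$, every $n \in \mathcal{N}_c \setminus N$ satisfies $\ignitionTime[n] \geq \ignitionTime[c]$ (including the case $\ignitionTime[n] = \infty$), so $\ignitionTime[c] - \ignitionTime[n] \leq 0$ and the corresponding summand $\max(0, \min(\ignitionTime[c]-\ignitionTime[n], \burningTime[n,0]))$ vanishes. Consequently the $N$-sum and the $\mathcal{N}_c$-sum agree at $t = \ignitionTime[c]$, and by definition of $\ignitionTime[c]$ this common value is at least $\ignitionCounter[c,0]$. Hence $\ignitionTime[c]$ lies in the set whose minimum defines $\predictedIgnitionTime[N,c]$, yielding $\predictedIgnitionTime[N,c] \leq \ignitionTime[c]$.

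The only subtlety to dispose of first is the edge case $\ignitionTime[c] = \infty$: then the full sum never reaches $\ignitionCounter[c,0]$ at any finite time, and since the $N$-sum is no larger it cannot either, so $\predictedIgnitionTime[N,c] = \infty$ as well and the claim holds trivially. Apart from this, the argument is pure bookkeeping around the definition, and I do not foresee any real obstacle; the content of the lemma is precisely the observation that neighbours with $\ignitionTime[n] \geq \ignitionTime[c]$ contribute zero to the cumulative burn felt by $c$ up to and including time $\ignitionTime[c]$.
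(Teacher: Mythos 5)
Your proof is correct and follows essentially the same route as the paper's: the key observation in both is that any neighbour $n$ with $\ignitionTime[n] \geq \ignitionTime[c]$ contributes $\max(0,\min(\ignitionTime[c]-\ignitionTime[n],\burningTime[n,0]))=0$ to the sum at time $t=\ignitionTime[c]$, so dropping such neighbours does not change the defining minimum. Your explicit split into the two inequalities (monotonicity in the neighbour set for one direction, the witness $t=\ignitionTime[c]$ for the other) and the treatment of the case $\ignitionTime[c]=\infty$ merely spell out details the paper leaves implicit.
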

\begin{proof}
	Obviously, if $N = \mathcal{N}_c$, then $\predictedIgnitionTime[N,c] = \ignitionTime[c]$.
	Consider a cell $n_i \in \mathcal{N}_c$ that is not in $N$;
	hence $\ignitionTime[n_i] \geq \ignitionTime[c] \Leftrightarrow \ignitionTime[c]-\ignitionTime[n_i] \leq 0$
	holds and consequently $\max \left(0, \min \left(\ignitionTime[c]-\ignitionTime[n], \burningTime[n,0]\right)\right) = 0$.
	This implies that $n_i$ does not affect $\ignitionTime[c]$:
	\begin{align*}
	\ignitionTime[c] &= \min \left\lbrace t\in\mathbb{N} ~\left|~ \ignitionCounter[c,0] \leq \sum\limits_{n \in \mathcal{N}_c} \max \left(0, \min \left(t-\ignitionTime[n], \burningTime[n,0]\right)\right)\right.\right\rbrace\\
	&= \min \left\lbrace t\in\mathbb{N} ~\left|~ \ignitionCounter[c,0] \leq \sum\limits_{n \in \mathcal{N}_c\setminus\{n_i\}} \max \left(0, \min \left(t-\ignitionTime[n], \burningTime[n,0]\right)\right) \right.\right\rbrace\\
	&= \predictedIgnitionTime[c, \mathcal{N}_c\setminus \{n_i\}].
	\end{align*}
\end{proof}

\begin{figure}[bt]
\centering
\begin{tikzpicture}
\draw[thick, |->] (0,0) node[left] (start) {0} -- (11,0) node[below] (end) {$t$};
\draw[|-|] (1, 1.5) node[left] (I1start) {\ignitionTime[n_1]} to node[above, midway] {\burningTime[n_1,0]} (4, 1.5) node[right] (I1end) {};
\draw[|-|] (2, 1  ) node[left] (I2start) {\ignitionTime[n_2]} to node[above, midway] {} (8, 1.0) node[right] (I2end) {};
\draw[|-|] (3, 0.5) node[left] (I3start) {\ignitionTime[n_3]} to node[above, midway] {} (5, 0.5) node[right] (I3end) {};
\draw[dashed, |-] (8.5, 1.5) node[left] (I4start) {} to node[midway, above] {\ignitionTime[n_4]} (end |- {{(0,1.5)}}) node[right] (I4end) {};
\draw[dashed, |-] (9.5, 1.0) node[left] (I5start) {} -- (end |- {{(0,1.0)}}) node[right] (I5end) {};
\draw[dashed, |-] ( 10, 0.5) node[left] (I4start) {} -- (end |- {{(0,0.5)}}) node[right] (I4end) {};

\draw[ultra thick] (7, 2) node[left] {$\ignitionTime[c]$} -- (7, -.5) node[right] {$\sum = 10 $};
\draw[dotted] (I1start.east) -- (I1start.east |- {{(0, -.25}});
\draw[dotted] (I2start.east) -- (I2start.east |- {{(0, -.25}});
\draw[dotted] (I3start.east) -- (I3start.east |- {{(0, -.25}});
\draw[dotted] (I1end.west) -- (I1end.west |- {{(0, -.25}});
\draw[dotted] (I2end.west |- {{(0, +2)}}) node[left] {$t^+$} -- (I2end.west |- {{(0, -.25}});
\draw[dotted] (I3end.west |- {{(0, +2)}}) node[left] {$t^-$} -- (I3end.west |- {{(0, -.25}});
\node[below] at (0.5, 0) {0};
\node[below] at (1.5, 0) {1};
\node[below] at (2.5, 0) {2};
\node[below] at (3.5, 0) {3};
\node[below] at (4.5, 0) {2};
\node[below] at (6.5, 0) {1};
\end{tikzpicture}
\caption{
A sweep to compute $\ignitionTime[c]$ in constant time.
Solid intervals indicate neighbours that affect $\predictedIgnitionTime[N,c]$, while dashed intervals indicate neighbouring cells that do not.
If $\ignitionCounter[c] = 10$, via sweep from left to right we can find $\ignitionTime[c] = 1 \cdot 1 + 1 \cdot 2 + 1 \cdot 3 + 1 \cdot 2 + 2 \cdot 1 = 10$ in constant time.
}
\label{figure:predictIgnition}
\end{figure}
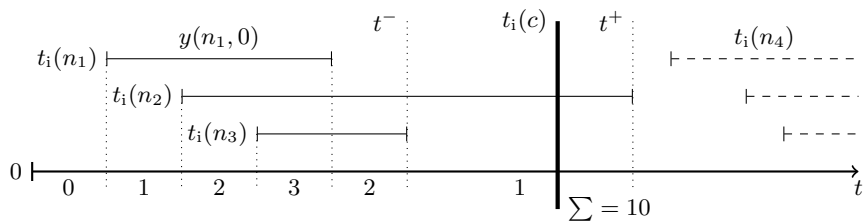

The idea of \autoref{algorithm:fastSimulation} is to compute \ignitionTime of all cells in ascending order;
a min-ordered priority-queue \queue is used to store, update tuples $(\predictedIgnitionTime, c)$ and to extract them when $\predictedIgnitionTime = \ignitionTime[c]$.
The priority-queue provides the following two operations:
\Update{\predictedIgnitionTime', c}, which checks whether \queue contains a tuple $(\predictedIgnitionTime, c)$ and inserts $(\predictedIgnitionTime', c)$ if this is not the case;
however, if $(\predictedIgnitionTime, c)\in \queue$, the operation updates the tuple to $(\min \lbrace \predictedIgnitionTime, \predictedIgnitionTime' \rbrace, c)$.
\ExtractMin{} removes a tuple $(\predictedIgnitionTime, c)\in \queue$ from \queue where \predictedIgnitionTime is minimal over all tuples of \queue and returns it.

\begin{procedure}[bt]
	\caption{FastFirePropagation($\mathcal{F}$)}
	\label{algorithm:fastSimulation}
	\Input{Domain \rectangle and set of initially burning cells $\mathcal{F} \subset \rectangle$.}
    \Output{$\ignitionTime[c]$ is computed for all $c\in\rectangle$ with $\ignitionTime[c]<\infty$.}
	\BlankLine
	$C = \emptyset$\\
	\ForEach{cell $c \in \mathcal{F}$}{
		\Update{0, c}\\
	}
	\While{\queue has node with finite value for \predictedIgnitionTime}{
		$(\predictedIgnitionTime, c) \gets \ExtractMin{}$\\
		$\ignitionTime[c] \gets \predictedIgnitionTime$\\
		$C \gets C \cup \lbrace c\rbrace$\\
		\ForEach{neighbour $n \in \mathcal{N}_c$ with $n \notin C$}{
			$\predictedIgnitionTime' \gets \predictedIgnitionTime[\mathcal{N}_n \cap C, n]$\\
			\If{$\predictedIgnitionTime' < \infty$}{
				\Update{\predictedIgnitionTime', n}\\
			}
		}
	}
\end{procedure}

We prove the correctness of this algorithm via the following loop invariant.
As the algorithm terminates when $Q$ is empty, Invariant \ref{li1} and Invariant \ref{li3} together guarantee, that the algorithm correctly calculates $\ignitionTime[c]$ for all $c$.

\begin{lemma}[loop invariant]
\label{lemma:loopInvariant}
After the $k$-th iteration of the while-loop, the following conditions hold:
\begin{enumerate}
	\item\label{li1} For all $c \in C$, \ignitionTime[c] has been computed correctly;
	\item\label{li2} for any cell $c \in C$ and any $c' \notin C$, it is $\ignitionTime[c] \leq \ignitionTime[c']$;
	\item\label{li3} for all cells $c \notin C$, if $(\predictedIgnitionTime,c) \in \queue$, it is $\predictedIgnitionTime =\predictedIgnitionTime[\mathcal{N}_c \cap C, c]$; otherwise $\predictedIgnitionTime[\mathcal{N}_c \cap C, c] = \infty$;
	\item\label{li4} for any tuple $(\predictedIgnitionTime,c) \in \queue$ where \predictedIgnitionTime is minimal among all tuples in \queue, it is $\predictedIgnitionTime = \ignitionTime[c]$ and $\ignitionTime[c]$ is minimal among all cells $\notin C$.
\end{enumerate}
\end{lemma}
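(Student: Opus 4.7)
The plan is to prove the four conditions simultaneously by induction on $k$, using Invariant~\ref{li4} at step $k$ to drive Invariants~\ref{li1} and~\ref{li2} at step $k+1$, a bookkeeping argument for Invariant~\ref{li3}, and \autoref{lemma:predictedIgnition} for what I expect to be the main obstacle, namely re-establishing Invariant~\ref{li4}.

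For the base case ($k=0$), immediately after initialization we have $C=\emptyset$, so Invariants~\ref{li1} and~\ref{li2} are vacuous. For every cell $c$, $\predictedIgnitionTime[\mathcal{N}_c \cap C, c] = \predictedIgnitionTime[\emptyset, c]$, which equals $0$ exactly for the initially burning cells $c \in \mathcal{F}$ (for which the tuple $(0, c)$ has just been inserted into \queue) and $\infty$ otherwise, giving Invariant~\ref{li3}. The queue minimum is then $(0, c)$ for some $c \in \mathcal{F}$, and $\ignitionTime[c] = 0$ is the globally smallest ignition time, giving Invariant~\ref{li4}.

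For the inductive step, suppose all four invariants hold after iteration $k$, and let $(p^*, c^*) = \ExtractMin{}$ in iteration $k+1$. By Invariant~\ref{li4} at step $k$, $p^* = \ignitionTime[c^*]$ and is minimal among ignition times of cells outside the old $C$, so the assignment $\ignitionTime[c^*] \gets p^*$ together with $C \gets C \cup \{c^*\}$ immediately preserves Invariants~\ref{li1} and~\ref{li2}. For Invariant~\ref{li3}, only cells $n \in \mathcal{N}_{c^*} \setminus C$ experience a change in $\mathcal{N}_n \cap C$; the algorithm explicitly recomputes $\predictedIgnitionTime[\mathcal{N}_n \cap C, n]$ (with $c^*$ already in $C$) and feeds the result to \Update{}. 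Since enlarging the set of known burning neighbours can only decrease the predicted ignition time, the call to \Update{} correctly overwrites the stored value whenever it was larger, and if the new value is $\infty$ no tuple is inserted. For cells $n \notin \mathcal{N}_{c^*}$ the previous queue entry is already consistent with the new $C$.

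The main obstacle is re-establishing Invariant~\ref{li4}. Let $c'$ denote a cell with minimum finite ignition time among cells outside of the new $C$ (if no such cell exists, the loop has already terminated). By the minimality of $\ignitionTime[c']$, every neighbour $n \in \mathcal{N}_{c'}$ with $\ignitionTime[n] < \ignitionTime[c']$ lies in the new $C$; by Invariant~\ref{li2} on the new $C$, every neighbour of $c'$ in $C$ has ignition time at most $\ignitionTime[c']$. Neighbours whose ignition time equals $\ignitionTime[c']$ contribute $0$ to the predicted sum at time $\ignitionTime[c']$, so \autoref{lemma:predictedIgnition} yields $\predictedIgnitionTime[\mathcal{N}_{c'} \cap C, c'] = \ignitionTime[c']$, and by Invariant~\ref{li3} the tuple $(\ignitionTime[c'], c')$ sits in \queue. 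Finally, for any other queued tuple $(p'', c'')$, monotonicity of $\predictedIgnitionTime[\cdot, c'']$ in its first argument together with the minimality of $\ignitionTime[c']$ gives $p'' \geq \ignitionTime[c''] \geq \ignitionTime[c']$, so $(\ignitionTime[c'], c')$ attains the minimum of \queue, completing the induction.
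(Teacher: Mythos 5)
Your proof is correct and follows essentially the same route as the paper's: the same simultaneous induction, with Invariants~\ref{li1} and~\ref{li2} driven by Invariant~\ref{li4} of the hypothesis, Invariant~\ref{li3} by the \Update{} bookkeeping, and Invariant~\ref{li4} re-established via \autoref{lemma:predictedIgnition}. The only cosmetic difference is that you argue Invariant~\ref{li4} directly (exhibiting the tuple of the minimal-ignition-time cell and squeezing $p'' \geq \ignitionTime[c''] \geq \ignitionTime[c']$), where the paper runs the same inequality chain as a proof by contradiction; your explicit observation that neighbours with ignition time equal to $\ignitionTime[c']$ contribute zero is a small detail the paper leaves implicit.
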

\begin{proof}[Proof (by induction).]
	After the initialization step, $C$ is still empty; hence, part \ref{li1} and \ref{li2} of the invariant trivially hold.
	In addition, \queue contains a tuples $(0,c)$ for each $c \in \mathcal{F}$, which obviously satisfies part \ref{li4};
	moreover, for cells $c \not\in \mathcal{F}$ it is $\predictedIgnitionTime[\mathcal{N}_c \cap C, c] = \infty$, since $C = \emptyset$, which together satisfies part \ref{li3}.
	
	Assume that the invariant holds for $(k-1)$ and consider the $k$-th iteration:
	
	A tuple $(\predictedIgnitionTime,c) \in \queue$ with a minimal \predictedIgnitionTime among all tuples in \queue is extracted, $c$ is added to $C$ and $\ignitionTime[c] = \predictedIgnitionTime$ is set.
	With part \ref{li4} of the induction hypothesis, we obtain that part \ref{li1} and \ref{li2} still hold after the $k$-th iteration.

	Let $c'$ be any cell $\notin C$.
	If $c'$ is not a neighbour of $c$, part \ref{li3} still holds for $c'$ by induction hypothesis since $\mathcal{N}_{c'} \cap C = \mathcal{N}_{c'} \cap (C \cup \{c\})$;
	otherwise part \ref{li3} holds for $c'$ as its tuple gets updated during the while-loop.
	
	It remains to show that part \ref{li4} still holds.
	Given $(\predictedIgnitionTime[], c^*) \in \queue$ where \predictedIgnitionTime[] is minimal in \queue after the $k$-th iteration. 
	If $\predictedIgnitionTime[]=0$, the invariant holds; hence, we may assume $\predictedIgnitionTime[]>0$.
    Further assume there is a cell $c \notin C$ such that $\ignitionTime[c] < \predictedIgnitionTime[]$ and $c$ has a minimal \ignitionTime[c] among all cells not in $C$.
    Due to the fact that part~\ref{li3} holds, we may conclude that $(\predictedIgnitionTime[\mathcal{N}_c \cap C], c)\in\queue$.%
    \footnote{Otherwise $\predictedIgnitionTime[\mathcal{N}_c \cap C, c] = \infty$, which would imply that there is a cell $c' \in\mathcal{N}_c$ such that $\ignitionTime[c'] < \ignitionTime[c]$ which contradicts the choice of $c$.}
    By \autoref{lemma:predictedIgnition} and minimality of $(\predictedIgnitionTime[], c^*)$ in \queue, we obtain $\predictedIgnitionTime[\lbrace n \in \mathcal{N}_c ~|~ \ignitionTime[n] < \ignitionTime[c]\rbrace, c] = \ignitionTime[c] < \ignitionTime[c^*] \leq \predictedIgnitionTime[] \leq \predictedIgnitionTime[\mathcal{N}_c \cap C, c]$,
    which gives a contradiction.
    Hence, part~\ref{li4} holds.
%
\end{proof}

The number of iterations is bounded by the number of cells that enter \queue, since one cell is extracted in each iteration;
only cells with $\ignitionTime[c] <\infty$ ever enter \queue, which bounds it by $D_{end}$, which is the number of dead cells at the end of the simulation.
In each iteration, $C$ contains contains cells with $\ignitionTime[c] < t_j$ for some $t_j$ and \queue contains cells that neighbour cells in $C$;
therefore the size of $\queue$ after the iteration is bounded by $F(t_j)$ and by $F_{max} := \max_{t\geq 0} \abs{F(t)}$ overall.
Hence the \autoref{algorithm:fastSimulation} runs in \bigO{D_{end} \cdot \log F_{max}}= \bigO{n \log n}, which completes the proof of \autoref{theorem:fastSimulationAlgorithm}.


\FloatBarrier
\newcommand{\HFire}{\ensuremath{\mathcal{R}_{\textit{Fire}}}\xspace}
\newcommand{\HTwoRM}{\ensuremath{\mathcal{R}_{2RM}}\xspace}
\subsection{Proof of \autoref{theorem:undecidability}}
\label{appendix:mayorsVilla}
To prove the theorem, we show how to simulate computations of so-called \emph{two-register machines} \cite{book:BoergerEtAl,article:minsky1961recursive} with our fire expansion model.
A two-register machine $M$ has exactly two registers $r_1, r_2$ that may contain an arbitrarily large natural number each.
To manipulate the content of the registers, there are two types of instructions.

An \emph{addition instruction} is a triple $(i, r, j)$ where $i \neq j$ is the number of the instruction, $r \in \lbrace r_1, r_2 \rbrace$ is a register and $j$ is the number of the following instruction.
When $M$ processes the instruction, the value of register $r$ is incremented by one.
Afterwards, $M$ continues with instruction number $j$.

A \emph{subtraction instruction} is a quadruple $(i, r, j, k)$ where $j \neq i \neq k$ are numbers of instructions and $r \in \lbrace r_1, r_2 \rbrace$ is a register.
When processing the instruction, $M$ checks whether the content of register $r$ is zero:
If this is the case, then $M$ proceeds with instruction $I_j$;
otherwise, $M$ decrements $r$ by one and proceeds with instruction $I_k$.
Note that using subtraction instructions, it is possible to branch the control flow and simulate conditional statements.

In the following, with $I_i$ we refer to the instruction with number $i$ regardless of its type.
A \emph{program} for two-register machine is a finite sequence of instructions $(I_1, \ldots, I_n)$; we assume $I_n$ to be a special instruction that causes the machine to halt.
With $M$ we refer to the program, as well as the two-register machine.

A \emph{configuration} $C$ of a two-register machine $M$ is a triple $(d_1, d_2, I_i)$ where $d_1, d_2 \in \mathbb{N}$ are the contents of the registers $r_1, r_2$ and $I_i$ is the instruction that $M$ is going to process next.

\newcommand{\geomCell}{box\xspace}
\newcommand{\geomCells}{boxes\xspace}
\paragraph{Geometric interpretation.}
For a two-register machine $M$ and its program, there is a nice geometric interpretation.
Imagine the first quadrant of the plane is divided into square \geomCells\footnote{We use the term \emph{\geomCell} to underline the difference to \emph{cells} which contain the \ignitionCounter and \burningTime-values in our model.} by the $\mathbb{N}\times\mathbb{N}$ grid; see \autoref{figure:geometricInterpretation}.
Each of these boxes can be uniquely described by a tuple of integers coordinates $(r_1,r_2)$, where $(0,0)$ corresponds to the bottom left most \geomCell.
Respectively, the coordinates of each \geomCell can be interpreted as the content of the registers of $M$;
\ie, the \geomCell with coordinates $(3, 2)$ corresponds to the register contents $3$ and $2$.

Into the quadrant of the plane, we embed a directed graph $G_M = (V, E)$ as follows:
In each \geomCell, a vertex $v_i$ is created for each instruction $I_i$ and placed along the diagonal from the upper-left to the bottom-right corner.
To distinguish the vertices of a \geomCell $c$ from those of a neighbouring \geomCell $n$, we use a superscript and write $v_i^c$ and $v_i^n$.
The directed edges of the graph run between vertices of the same or a neighbouring \geomCell and are introduced for each \geomCell $c$ according to the following rules:
\todo{to adress a vertex in a box with coordinates $(c, d)$, we use superscript ...}
\DK{use the introduced notation for the definition of the edges.}
\begin{itemize}
\item For each addition instruction $I_i = (i, r, j)$ of $M$, insert a directed edge $(v_i^c, v_j^n)$ where $n$ is the right (upper) neighbour of $c$ if $r=r_1$ $(r=r_2)$.
\item For each subtraction instruction $I_i = (i, r, j, k)$ of $M$, insert a directed edge between nodes of neighbouring \geomCells considering the following cases:
\begin{itemize}
	\item Let $r=r_1$. If $c$ has a left neighbour $n$, then add an edge $(v_i^c, v_j^n)$; otherwise, add the edge $(v_i^c, v_k^u)$ where $u$ denotes the \geomCell above $c$.
	\item Let $r=r_2$. If $c$ has a lower neighbour $n$, then add an edge $(v_i^c, v_j^n)$; otherwise, add the edge $(v_i^c, v_k^r)$ where $r$ denotes the \geomCell right of $c$.
\end{itemize} 
\end{itemize}
\begin{figure}[tbh]
\centering
\begin{tikzpicture}
\colorlet{grey}{black!30}
\pgfmathsetmacro{\Gridsize}{2.0}
\draw[step=\Gridsize] (0,0) grid (5*\Gridsize,3*\Gridsize);
\draw[->] (0,0) -- (5*\Gridsize + .5,0) node[below] {$r_1$};
\draw[->] (0,0) -- (0,3*\Gridsize + .5) node[left] {$r_2$};
\foreach \x in {0,...,4}
	\node[below] at (+\Gridsize/2+\Gridsize*\x,0) {$\x$};
\foreach \y in {0,...,2}
	\node[left] at (0,+\Gridsize/2+\Gridsize*\y) {$\y$};

\foreach \row in {1,...,4} {
    \foreach \column in {1,...,5} {
		\coordinate (ir\row;ic\column) at (-1.5+\Gridsize*\column,-0.5+\Gridsize*\row);
		\coordinate (jr\row;jc\column) at (-1.0+\Gridsize*\column,-1.0+\Gridsize*\row);
		\coordinate (kr\row;kc\column) at (-0.5+\Gridsize*\column,-1.5+\Gridsize*\row);
	}
}
\draw[grey, fill=grey] (ir2;ic5) circle (2pt) node[above right] {$v_i$};
\draw[grey, fill=grey] (jr2;jc5) circle (2pt) node[above right] {$v_j$};
\draw[grey, fill=grey] (kr2;kc5) circle (2pt) node[above right] {$v_k$};

\draw[black, fill=black] (ir2;ic4) circle (2pt) node[below right] {$v_i$};
\draw[black, fill=black] (jr2;jc4) circle (2pt) node[below right] {$v_j$};
\draw[grey, fill=grey] (kr2;kc4) circle (2pt) node[below right] {$v_k$};

\draw[grey, fill=grey] (ir2;ic3) circle (2pt) node[below left] {$v_i$};
\draw[grey, fill=grey] (jr2;jc3) circle (2pt) node[below left] {$v_j$};
\draw[black, fill=black] (kr2;kc3) circle (2pt) node[below left] {$v_k$};

\draw[grey, fill=grey] (ir1;ic4) circle (2pt) node[below left] {$v_i$};
\draw[grey, fill=grey] (jr1;jc4) circle (2pt) node[below left] {$v_j$};
\draw[grey, fill=grey] (kr1;kc4) circle (2pt) node[below left] {$v_k$};

\draw[grey, fill=grey] (ir3;ic4) circle (2pt) node[below right] {$v_i$};
\draw[black, fill=black] (jr3;jc4) circle (2pt) node[below right] {$v_j$};
\draw[grey, fill=grey] (kr3;kc4) circle (2pt) node[below right] {$v_k$};

\draw[bend right, thick, -latex] (ir2;ic4) to [bend left] (jr3;jc4);
\draw[bend right, thick, -latex] (jr2;jc4) to [bend left] (kr2;kc3);

\draw[grey, fill=grey] 	(ir3;ic1) circle (2pt) node[below right] {$v_i$};
\draw[black, fill=black] 	(jr3;jc1) circle (2pt) node[below right] {$v_j$};
\draw[grey, fill=grey] 	(kr3;kc1) circle (2pt) node[below right] {$v_k$};

\draw[black, fill=black] 	(ir2;ic1) circle (2pt) node[below left] {$v_i$};
\draw[black, fill=black] 	(jr2;jc1) circle (2pt) node[below left] {$v_j$};
\draw[grey, fill=grey] 	(kr2;kc1) circle (2pt) node[below left] {$v_k$};

\draw[grey, fill=grey] (ir2;ic2) circle (2pt) node[above right] {$v_i$};
\draw[grey, fill=grey] (jr2;jc2) circle (2pt) node[above right] {$v_j$};
\draw[grey, fill=grey] (kr2;kc2) circle (2pt) node[above right] {$v_k$};

\draw[grey, fill=grey] (ir1;ic1) circle (2pt) node[below left] {$v_i$};
\draw[grey, fill=grey] (jr1;jc1) circle (2pt) node[below left] {$v_j$};
\draw[grey, fill=grey] (kr1;kc1) circle (2pt) node[below left] {$v_k$};

\draw[bend right, thick, -latex] (ir2;ic1) to [bend left] (jr3;jc1);
\draw[bend right, thick, -latex] (jr2;jc1) to [bend right] (ir2;ic1);
\end{tikzpicture}
\caption{
In the geometric interpretation of a two-register machine, a vertex is created for each instruction and in every \geomCell;
for simplicity, this has only been done for \geomCells with coordinates $(4,2)$, $(1,2)$ and their neighbouring \geomCells in this sketch.
For the \geomCells with coordinates $(4,2)$ and $(1,2)$, the directed edges are given that result from the addition instruction $I_i = (i, r_2, j)$ and the subtraction instruction $I_j = (j, r_1, i, k)$.
}
\label{figure:geometricInterpretation}
\end{figure}
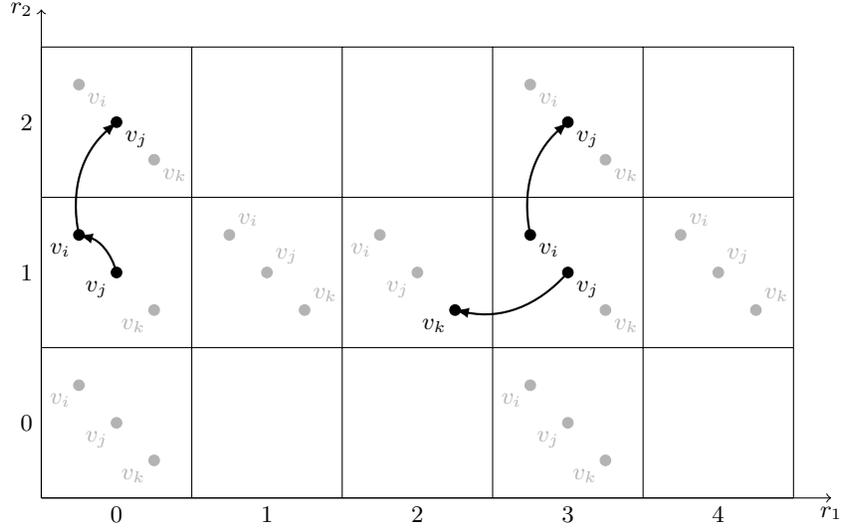

Note that there are (countably) infinite many \geomCells, vertices and edges embedded in the plane, but the description itself is finite.
Clearly, there is a one-to-one correspondence between vertices of $G_M$ and configurations of $M$:
A vertex $v_i$ in a \geomCell with coordinates $(d_1, d_2)$ corresponds to the configuration $(d_1, d_2, I_i)$ of the two-register machine $M$ and vice versa.
\todo{use superscripts}
Similarly, a sequence of configurations of $M$ corresponds to a unique directed path in $G_M$ and vice versa.

\paragraph{Simulation of two-register machines.}
Computations of a two-register machine can be simulated by expanding fire in an orthogonal square lattice.
A very basic observation is that, with an appropriate choice of weights, we can force a fire to expand along a sequence of cells into a certain direction; see \autoref{figure:wire}.
In the following, we call such a sequence of cells a \emph{wire}.
\begin{figure}[tbh]
\centering
	\begin{subfigure}[b]{0.3\textwidth}
	    \centering
    	\begin{tikzpicture}
    	\grid{3}{7}{}{burntSquare};
        \node[aliveSquare] at (r1;c2) {1;1};
        \node[aliveSquare] at (r2;c2) {2;1};
        \node[aliveSquare] at (r2;c3) {1;1};
        \node[burningSquare] at (r2;c4) {0;1};
        \node[aliveSquare] at (r2;c5) {1;2};
        \node[aliveSquare] at (r2;c6) {2;1};
        \node[aliveSquare] at (r2;c7) {1;1};
		\end{tikzpicture}
		\caption{Initial fire at time $t$.}
	\end{subfigure}
	~
	\begin{subfigure}[b]{0.3\textwidth}
	    \centering
    	\begin{tikzpicture}
    	\grid{3}{7}{}{burntSquare};
        \node[aliveSquare] at (r1;c2) {1;1};
        \node[aliveSquare] at (r2;c2) {2;1};
        \node[burningSquare] at (r2;c3) {0;1};
        \node[burntSquare] at (r2;c4) {0;0};
        \node[burningSquare] at (r2;c5) {0;2};
        \node[aliveSquare] at (r2;c6) {2;1};
        \node[aliveSquare] at (r2;c7) {1;1};
		\end{tikzpicture}
		\caption{At time $t+1$.}
	\end{subfigure}
    ~
	\begin{subfigure}[b]{0.3\textwidth}
	    \centering
    	\begin{tikzpicture}
    	\grid{3}{7}{}{burntSquare};
        \node[aliveSquare] at (r1;c2) {1;1};
        \node[aliveSquare] at (r2;c2) {1;1};
        \node[burntSquare] at (r2;c3) {0;0};
        \node[burntSquare] at (r2;c4) {0;0};
        \node[burntSquare] at (r2;c5) {0;0};
        \node[burntSquare] at (r2;c6) {0;0};
        \node[burningSquare] at (r2;c7) {0;1};
		\end{tikzpicture}
		\caption{At time $t+4$.}
	\end{subfigure}
\caption{Fire expanding along a wire. For all grey cells, we assume that $\ignitionCounter = \burningTime = 0$. Note that due to choices for \ignitionCounter and \burningTime, after a while the fire can only continue to expand to the right.}
\label{figure:wire}
\end{figure}
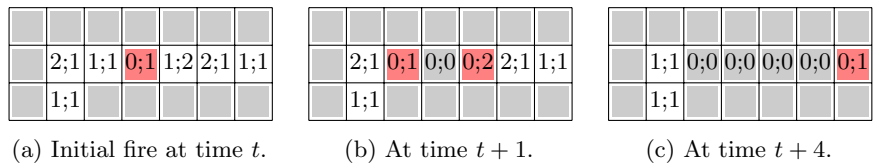
We now describe the layout and the cells of the orthogonal square lattice.
We start with a subdivision of the first quadrant of the plane into \geomCells as described above and shown in \autoref{figure:geometricInterpretation}.
To obtain the cells of the firefighting model, we refine the \geomCells using the following gadgets:
\begin{itemize}
\item The \textit{crossing gadget} allows a fire to burn along one wire and cross another wire without causing it to burn;
\item the \textit{instruction gadget} implements a vertex $v_i$ that corresponds to an instruction $I_i$ in a \geomCell;
\item the \textit{\geomCell gadget} defines the basic layout of every \geomCell with the use of crossing and instruction gadgets;
\item the \textit{connector gadget} is used to link neighbouring \geomCell gadgets.
\end{itemize}
In the standard version of our model, it seems impossible to realize a \emph{crossing gadget} since a fire burning along one wire tends to expand also along the other one.
However, if we allow (i) either for a constant number of additional layers or (ii) regeneration of the \ignitionCounter-values of cells, crossings can in fact be realized:
(i) With three layers of cells, it is not difficult to realize a crossing within a $5\times 5 \times 3$ cuboid as shown in \autoref{subfigure:crossingMultiLayer}.
While one of the wires runs straight through the bottom layer, the second one bypasses it using the topmost layer such that both wires are separated by one \dead cell.
(ii) When cells are allowed to restore their original resistance values \ignitionCounter over time, a crossing can be realized even in a two-dimensional lattice using $11 \times 11$ cells.
The gadget shown in \autoref{subfigure:crossingGadgetRegeneration} is based on the assumption that the following rule is added to the definition of the basic hexagonal model in \autoref{section:introduction}:
\textit{If $c$ is not \burning at time $t$ and $t+1$ and has no direct burning neighbour at time $t$, then $\ignitionCounter[c, t+1]:= \min\lbrace \ignitionCounter[c, 0], \ignitionCounter[c,t]+1 \rbrace$ and $\burningTime[c, t+1]:= \min\lbrace \burningTime[c, 0], \burningTime[c,t]+1 \rbrace$.}

\begin{figure}[tbh]
    \centering
    \begin{subfigure}[b]{0.45 \textwidth}
        \centering
        \begin{tikzpicture}[scale=0.7]
            \clip (-3,-3) rectangle (3,3);
\coordinate (tf) at (0,0);
\coordinate (bf) at (0,-3);
\coordinate (tr) at (15:3.0cm);
\coordinate (tl) at (165:3.0cm);

   \coordinate (fr) at ($ (tf)!5!(tr) $);
   \coordinate (fl) at ($ (tf)!5!(tl) $);
   \coordinate (fb) at ($ (tf)!15!(bf) $);

   \path[name path=brpath] (bf) -- (fr);
   \path[name path=rbpath] (tr) -- (fb);
   \path[name path=blpath] (bf) -- (fl);
   \path[name path=lbpath] (tl) -- (fb);
   \path[name path=trpath] (tl) -- (fr);
   \path[name path=tlpath] (tr) -- (fl);

   \draw[name intersections={of=brpath and rbpath}] (intersection-1)coordinate (br){}; 
   \draw[name intersections={of=blpath and lbpath}] (intersection-1)coordinate (bl){}; 
   \draw[name intersections={of=trpath and tlpath}] (intersection-1)coordinate (tb){}; 

\draw[opacity=0.5,fill=gray] (tf) -- (bf) -- (bl) -- (tl) -- cycle;
\draw[opacity=0.5,fill=gray] (tf) -- (bf) -- (br) -- (tr) -- cycle;
\draw[opacity=0.5,fill=gray] (tf) -- (tr) -- (tb) -- (tl) -- cycle;

\def\tone{.23} 
\def\ttwo{.44}
\def\tthree{.64}
\def\tfour{.82}
\def\fone{.36}
\def\ftwo{.7}
\draw[fill=white] ($ (bf)!\ttwo!(br) $)  -- ($ (tf)!\ttwo!(tr) $) -- ($ (tl)!\ttwo!(tb) $) -- ($ (tl)!\tthree!(tb) $) -- ($ (tf)!\tthree!(tr) $) -- ($ (bf)!\tthree!(br) $) -- cycle;
\draw[fill=white] let \p1=($(tf)!\ftwo!(bf)$), \p2=($(tl)!\ftwo!(bl)$) in ($ (bf)!\ttwo!(bl) $) -- ($ (bf)!\tthree!(bl) $)  -- ($(\p1)!\tthree!(\p2)$) -- ($(\p1)!\ttwo!(\p2)$) -- cycle;

\draw (tf) -- (bf);
\draw (tf) -- (tr);
\draw (tf) -- (tl);
\draw (tr) -- (br);
\draw (bf) -- (br);
\draw (tl) -- (bl);
\draw (bf) -- (bl);
\draw (tb) -- (tr);
\draw (tb) -- (tl);

\draw ($ (bf)!\tone!(br) $)   -- ($ (tf)!\tone!(tr) $)   -- ($ (tl)!\tone!(tb) $);
\draw ($ (bf)!\ttwo!(br) $)   -- ($ (tf)!\ttwo!(tr) $)   -- ($ (tl)!\ttwo!(tb) $);
\draw ($ (bf)!\tthree!(br) $) -- ($ (tf)!\tthree!(tr) $) -- ($ (tl)!\tthree!(tb) $);
\draw ($ (bf)!\tfour!(br) $)  -- ($ (tf)!\tfour!(tr) $)  -- ($ (tl)!\tfour!(tb) $);

\draw ($ (bf)!\tone!(bl) $)   -- ($ (tf)!\tone!(tl) $)   -- ($ (tr)!\tone!(tb) $);
\draw ($ (bf)!\ttwo!(bl) $)   -- ($ (tf)!\ttwo!(tl) $)   -- ($ (tr)!\ttwo!(tb) $);
\draw ($ (bf)!\tthree!(bl) $) -- ($ (tf)!\tthree!(tl) $) -- ($ (tr)!\tthree!(tb) $);
\draw ($ (bf)!\tfour!(bl) $)  -- ($ (tf)!\tfour!(tl) $)  -- ($ (tr)!\tfour!(tb) $);

\draw ($ (tl)!\fone!(bl) $) -- ($ (tf)!\fone!(bf) $) -- ($ (tr)!\fone!(br) $);
\draw ($ (tl)!\ftwo!(bl) $) -- ($ (tf)!\ftwo!(bf) $) -- ($ (tr)!\ftwo!(br) $);
        \end{tikzpicture}
        \caption{A crossing gadget using three layers of cells.}
        \label{subfigure:crossingMultiLayer}
    \end{subfigure}
    ~
    \begin{subfigure}[b]{0.45 \textwidth}
        \centering
        \begin{tikzpicture}[scale=0.95]
            \input{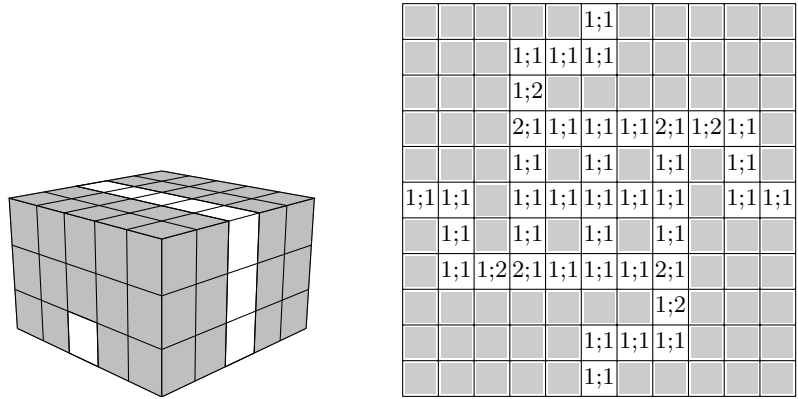}
        \end{tikzpicture}
        \caption{A crossing of wires based on regeneration of \ignitionCounter-values.}
        \label{subfigure:crossingGadgetRegeneration}
    \end{subfigure}
    \caption{Realization of a crossing gadget depending on the variant of the model.}
\end{figure}

The layout of an \emph{instruction gadget} is given in \autoref{figure:instructionGadget}.
Four cells with weights $(1;2)$ along the boundary mark the positions where incoming wires will be connected to;
hence, we call them \emph{input cells}.
Note that, due to the choice of these weights, a fire cannot ignite an input cell (and an attached wire) from inside the gadget.
Moreover, a fire burning along an incoming wire will eventually reach the \emph{central cell} with weights $(1;1)$.
Afterwards, the fire will continue to burn along the wire and reach the cell with weights $(1;1)$ at the boundary which we call the \emph{output cell}.
To obtain the instruction gadget for a specific instruction, we rotate the gadget by a multiple of $90^\circ$ such that the output cell points into the direction of the involved neighbouring \geomCell:
For addition instructions to $r_1$ ($r_2$), the output cell points to the right (upwards);
for a subtraction instruction to $r_1$ ($r_2$), the output cell points to the left (downwards).
\begin{figure}[bt]
    \centering
    \begin{tikzpicture}[scale=0.95]
        \input{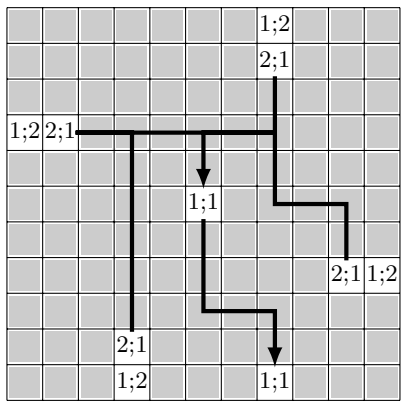}
    \end{tikzpicture}
    \caption{
    The layout of an instruction gadget.
    Black edges indicate cells with weights $\ignitionCounter =\burningTime = 1$; all remaining grey cells have $\ignitionCounter =\burningTime =0$.
    The choice of weights assures two properties:
    a fire reaching an \emph{input cell} at the boundary can expand into the gadget;
    a fire inside the gadget can only reach the \emph{output cell} on the boundary.
    }
    \label{figure:instructionGadget}
\end{figure}

Given the layout of both, instruction and crossing gadgets, we now describe how a \emph{\geomCell gadget} is constructed.
Let $n$ be the number of instructions of $M$.
Start with an orthogonal square lattice of $23n \times 23n$ cells, where all cells get the weights $\ignitionCounter = \burningTime = 0$.
In a first step, we place $n$ instruction gadgets along the diagonal from the top left to the bottom right cell.
Then, we add (in- and output) wires and crossing gadgets as shown in \autoref{figure:boxGadgets}:
In each direction and for each instruction gadget, the \geomCell gadget has an input and output wire, which can be used to connect the instruction gadget to a neighbouring \geomCell into this direction.
\begin{figure}[bt]
    \centering
        \includegraphics[width=.5 \textwidth, page=4]{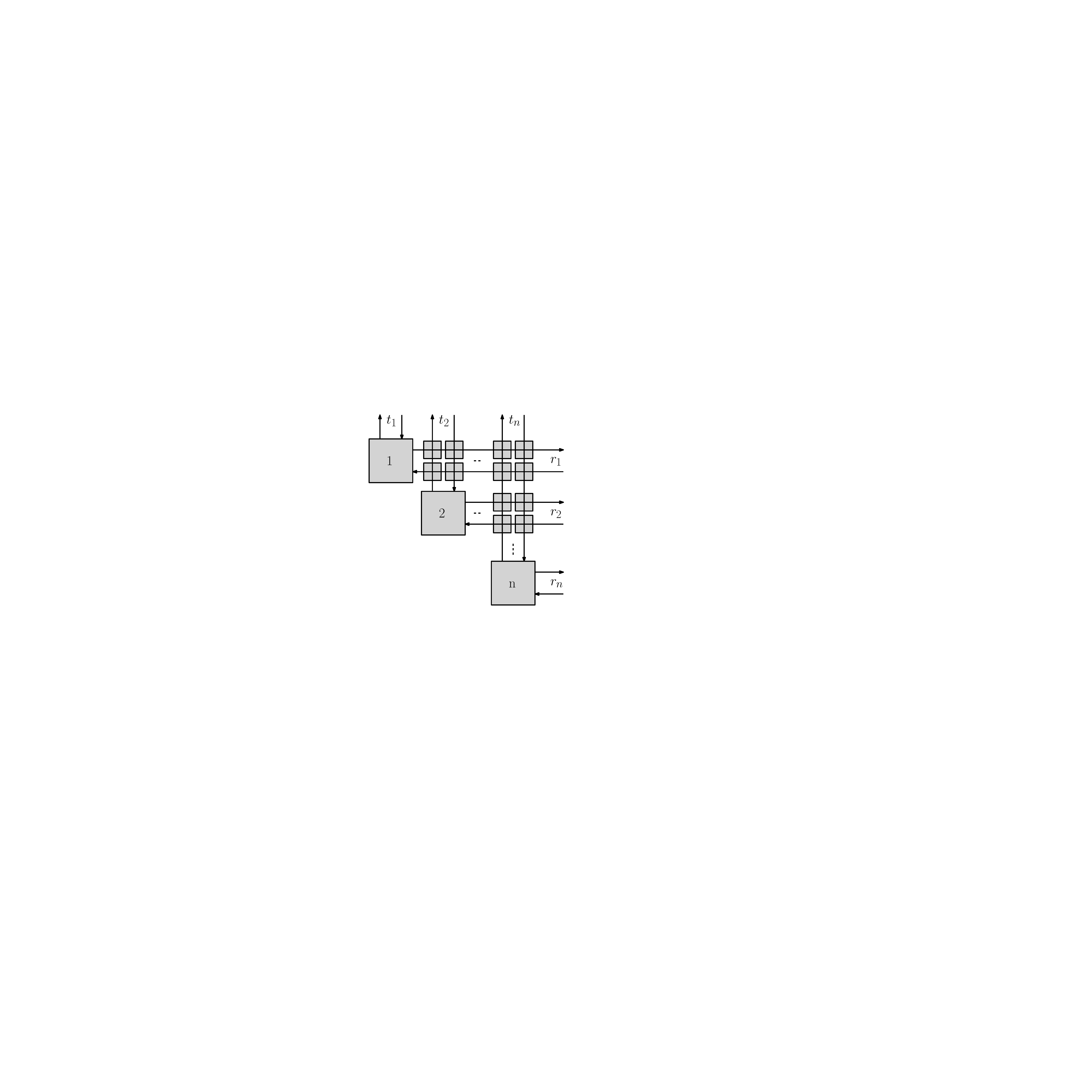}
    \caption{
    Layout of the \geomCell gadget:
    Large rectangles along the diagonal indicate instruction gadgets of size $23 \times 23$;
    small rectangles indicate crossing gadgets of size (at most) $11 \times 11$;
    directed edges model input and output wires.
    }
    \label{figure:boxGadgets}
\end{figure}

Finally, consider the layout of a \emph{connector gadget} depicted in \autoref{subfigure:connectorGadget}.
The gadget is used to systematically connect an instruction gadget to either an instruction gadget of the same \geomCell gadget or an instruction gadget of a neighbouring%
\footnote{When connecting vertically neighboured \geomCell gadgets, the connector gadget of \autoref{subfigure:connectorGadget} is rotated by $90^\circ$.}
\geomCell gadget.
In a first step, a connector gadget is placed between two neighbouring \geomCell gadgets as shown in \autoref{figure:connectorGadgetExample} for the horizontal case.
Then, for each instruction $I_i$ of $M$ we proceed as follows:
If $I_i = (i, r, j)$ is an instruction that adds one to the first register, we extend the outgoing wire $r_i$ straight to the right, pass the crossing gadgets of the connector gadget until we meet the wire $l_j$ running towards the left input wire of the instruction gadget $j$ of the neighbouring \geomCell.
At this position, we remove the crossing gadget and attach wire to wire.
Similarly, we proceed for additions to the second register and subtraction instructions.
\begin{figure}[bt]
    \centering
    \begin{subfigure}[b]{0.26 \textwidth}
        \centering
        \includegraphics[height=3.2cm, page=6]{./figures/cellGadget}
        \caption{A \geomCell gadget.}
    \end{subfigure}
    \begin{subfigure}[b]{0.44 \textwidth}
        \centering
        \includegraphics[height=3.2cm, page=1]{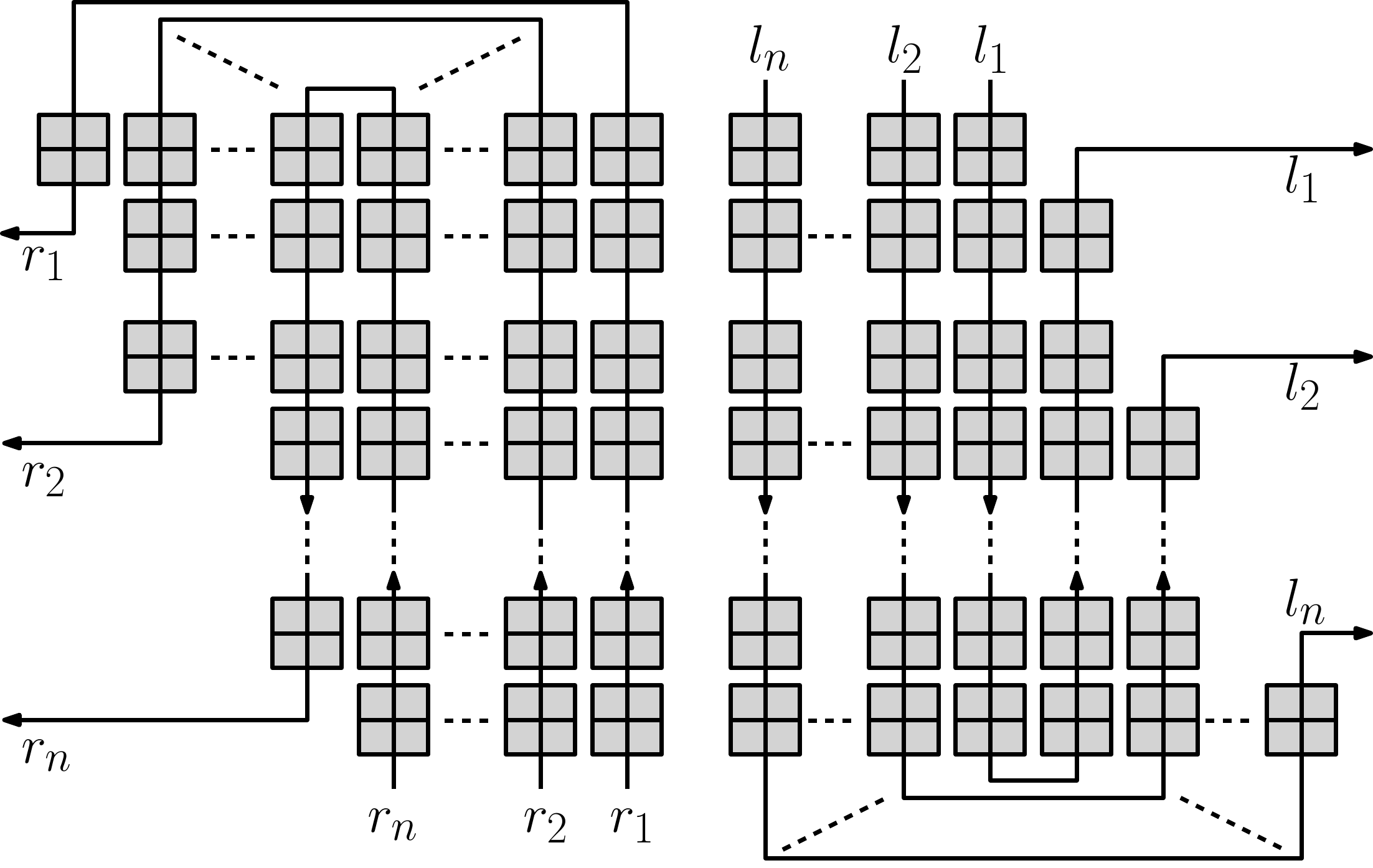}
        \caption{A connector gadget.}
        \label{subfigure:connectorGadget}
    \end{subfigure}
    \begin{subfigure}[b]{0.26 \textwidth}
        \centering
        \includegraphics[height=3.2cm, page=5]{./figures/cellGadget}
        \caption{Neighbour.}
    \end{subfigure}
    \caption{The connector gadget is placed between two neighbouring \geomCell gadgets.}
    \label{figure:connectorGadgetExample}
\end{figure}

%

Let $(d_1, d_2, I_i)$ be the starting configuration of a two-register machine $M$ and $c$ be the central cell of the instruction gadget corresponding to $v_i$ in the \geomCell with coordinates $(d_1, d_2)$.
As described in the geometric interpretation above, a sequence of configurations of $M$ corresponds to a unique directed path in the graph $G_M$ and vice versa.
Similarly, the vertices of this path in $G_M$ correspond to a unique sequence of burning instruction gadgets when setting $c$ on fire.

Minsky \cite{article:minsky1961recursive} proved that for any Turing machine $T$, there is a two-register machine program $M$ that computes the same function.
His proof implies that:
$T$ halts with an empty semi-infinite tape and the head placed right of the square which indicates the end of the tape if and only if $M$ halts with the register contents $r_1 = 3$ and $r_2 = 0$.
Due to the construction of our instance of the firefighting model induced by $M$, this holds if and only if the fire propagation eventually ignites the central cell of the instruction gadget corresponding to $v_n$ of the \geomCell with coordinates $(3,0)$.
Consequently, any algorithm deciding the problem of \autoref{theorem:undecidability} can be used to decide the following special case of the general halting problem:
Given a Turing machine $T$ with a single, semi-infinite tape and an input $w\in\lbrace 0,1\rbrace^*$; 
does $T$ halt with an empty tape and the head placed right of the tapes end, when started with input $w$?
The undecidability of this problem implies the undecidability of the decision problem of \autoref{theorem:undecidability} and completes the proof.

\FloatBarrier
\subsection{Omitted proof from \autoref{subsection:villageProtection}}
\label{appendix:intersectionLemma}
\setcounter{lemma}{1}

\begin{proof}
	Assume $\wall_b$ is a shortest local-cost $s$-$t$-path with \twist $1$ and at least one intersection.
	Then we will construct a cheaper path $\wall_b'$ with no intersections to prove a contradiction.
	
	The path $\wall_b$ can be represented as a list of $n+2$ vertices $s, v_1, v_2, \ldots, v_i$, $\ldots, v_{n-1}, v_n, t$.
	It contains intersections as long as there exists a reappearing vertex, so $v_i = v_j$ for some distinct $i \neq j$.
	Let $v = v_i$ be the first such vertex on the path and $v_j = v$ its next appearance.
	Then a new path can be constructed by removing everything between $v_i$ and $v_{j+1}$. 
	Repeating this until no vertex appears twice in the path results in a path free of intersections.
	
	However, while removed edges can no longer contribute to the cost of the path, removing them can change the type and hence the local cost of the edges from $v$ up to the first unaffected $r$-edge after $(v,v_{j+1})$. Therefore we must look at the specific situation at $v$ more closely to analyse the change in cost.
	
	The situation at the intersection can take one of the following four cases illustrated in \autoref{fig:intersectionCases}:
	\begin{enumerate}
		\item $\wall_b$ turns right at $v_i$ and left at $v_j$.
		\item $\wall_b$ turns left at $v_i$ and left at $v_j$.
		\item $\wall_b$ turns right at $v_i$ and right at $v_j$.
		\item $\wall_b$ turns left at $v_i$ and right at $v_j$.
	\end{enumerate}
	Notice that these are all cases because the path cannot include $v_{i-1}$ a second time as we assumed $v_i$ to be the first reappearing vertex.
	
	\begin{figure}
		\centering
		\includegraphics[width=\textwidth]{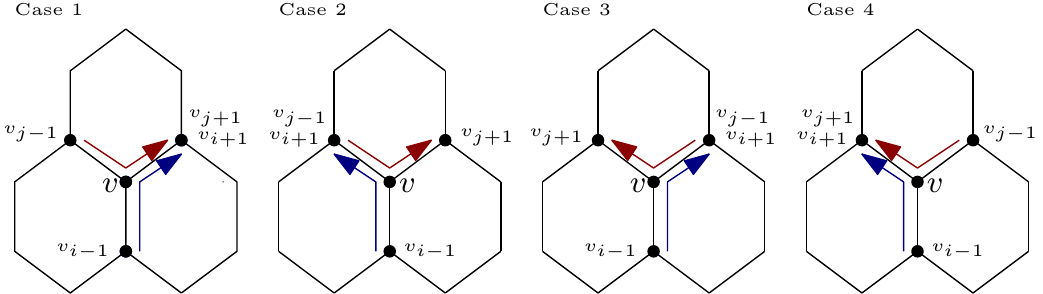}
		\caption{$\wall_b$ follows the blue edges upon its first visit of $v$ and the red edges on its second visit. Cases are equivalent for rotation.}
		\label{fig:intersectionCases}
	\end{figure}
	
	Recall the local-cost function:
	edges of type $r$ and $l_1$ cost between $0$ and $\burningTime$; edges of type $l_2, l_3$, and $l_4$ always cost $\burningTime$.
	Also for edges along the same cell, type $r$ edges are at most as costly as $l_1$ edges and both are at most as costly as edges of type $l_2, l_3$, and $l_4$.
	Hence, edges can only become more expensive if they were of type $r$ or $l_1$ on $\wall_b$ and the new cost can be at most $y$.
	
	In cases $1$ and $2$, $(v,v_{j+})$ is an edge of type $l_1$ (or $l_2, l_3$ or $l_4$) on $\pi_b$ and all further edges until the next $r$ edge are of type $l_2, l_3$, or $l_4$.
	On $\pi_b'$,  $(v,v_{j+})$ is of type $r$, so the local-cost of $\pi_b'$ can only be less or equal to that of $\pi_b$ in these cases.
	
	In case $3$ and $4$, $(v,v_{j+1})$ is an edge of type $r$ on $\wall_b$ and $(v_{j+1},v_{j+2})$ could be of type $l_1$, so they could increase in cost by $y$ each.
	All further edges until the next $r$ edge are of type $l_2, l_3$, or $l_4$ and hence do not matter.
	So $\wall_b'$ might have local-cost higher than $\wall_b$ by $2y$, if the removed part of the path $v_i, v_{i+1}, \ldots, v_{j}$ had cost less than $2y$.
	But then, the removed part can contain at most $2$ edges of type $l_k$ with $k > 1$.
	But this is not enough to reach the edge $(v_{j-1}, v)$ from $(v, v_{i+1})$ without taking more right turns than left turns.
	So in both these cases, the removal of that piece of the path either does not increase the cost of $\wall_b'$ or the \twist of $\wall_b'$ is at least $6$ higher than that of $\wall_b$.
	
	
	But at the end of the procedure we definitely arrive at a simple path, and hence at a path $\wall_b'$ with \twist $1$.
	Hence at some later intersection we will remove a part of the path, such that the \twist of $\wall_b'$ decreases by $6$.
	Considering that removing a part changes the type of at most one of the edges still in $\wall_b'$ to $r$ in any of the cases given above, the removed part contains at least $5$ more edges of type $l_1, l_2, l_3$, or $l_4$ than of type $r$.
	That means, it must include at least $5$ edges of type $l_2, l_3$ or $l_4$, which means the cost of $\wall_b'$ is at least $5y$ lower than that of $\wall_b$ which counters the cost of this removal.
	In fact similar arguments can be made about the removed parts of the path in case $1$ and $2$, so that in total, the cost of $\wall_b'$ is strictly less than $\wall_b$.
\end{proof}

\subsection{Selective fortification algorithm}
\label{appendix:IndividualAlgorithm}
The algorithm computes a solution for the selective fortification problem by computing shortest local cost separating paths from any vertex $s$ at the bottom boundary of a rectangular grid to any vertex $t$ at the top boundary.
$\LocalCost{(v_1,v_2), e}$ therein always denotes the local cost $c$ of an edge $(v_1, v_2)$ of type $e$.

Let $e$ be an edge along a cell $c$ of resistance $\ignitionCounter_e$. Remember that we also assume, that \burningTime is identical for all cells.
We say $e$ is of type $r$ if it comes after a right turn or is the very first edge of the path.
This will mean that it goes along a different left-hand cell than the previous edge on the path.
We say $e$ is of type $l_k$ if it comes after the $k^{th}$ consecutive left turn.
This will make it the $k+1^{th}$ consecutive edge along the same left-hand cell.
Then, the local cost $c$ of $e$ depends on $\ignitionCounter_e$ and its type:

\begin{equation*}
c(e, \text{type}) =
\begin{cases}
\max (0, y+1-\ignitionCounter_e), & \text{if type} = r\\
\min (y, 2y+1-\ignitionCounter_e), & \text{if type} = l_1\\
y, & \text{if type} = l_2,l_3,l_4\\
\end{cases} 
\end{equation*}

The algorithm makes use of a priority queue $\queue$ storing tuples $(v, p, e, w)$ storing the shortest known local cost path to the vertex $v$ via a predecessor $p$, where the last edge $(p,v)$ is of edge type $e$ and the path has twist $w$.
It functions similar to Dijkstra's shortest path algorithm, but it needs the additional information $p$ and $e$ in the tuple to correctly calculate the cost of the next edge via the function $\LocalCost{}$.
In addition, it separates paths by the twist $w$ in the tuples in the $\queue$ to filter out an intersection free path in the end.
As this introduces potentially infinite tuples, we use the \twist limit based on the size of \rectangle to block tuples with too high or low twist from ever entering \queue.

The correctness of the algorithm follows from the correctness of Dijkstra's shortest path algorithm in combination with arguments from \autoref{subsection:villageProtection}.
While this algorithm only calculates the cost a solution for the selective fortification problem, it can easily be adjusted to output the whole solution path.

\begin{procedure}[hbt]
	\caption{Selective fortification path($\mathcal{R}$)}
	\label{algorithm:selectiveAlgorithm}
	\Input{The dual graph of a rectangular grid \rectangle in our basic hexagonal model, and the upper twist limit $w_{max}$ based on its width and height.}
	\Output{The cost of an optimal solution for the selective fortification problem on \rectangle}
	\BlankLine
	\ForEach{cell $s$ on the bottom boundary of \rectangle and the vertex $v_s$ directly above it}{
		\Update{\LocalCost{(s, v_s), r}, v_s, s, r, 1}\\
	}
	\While{\queue is not empty}{
		$(c, v, p, e, w) \gets \ExtractMin{}$\\
		\If{$v$ is a cell $t$ on the top boundary of \rectangle}{
			\If{$w=1$}{
			\Return $c$\\}}
		\Else {
			$v_r \gets \text{vertex reached when taking a right turn after the edge $(p,v)$}$\\
			\If{$w < w_{max}$}{
				\Update{c + \LocalCost{(v, v_r), r}, v_r, v, r, w+1}\\
			}
			$v_l \gets \text{vertex reached when taking a left turn after the edge $(p,v)$}$ \\
			\If{$w > -w_{max}$}{
				\Switch{$e$}{
					\Case{$r$}{
						\Update{c + \LocalCost{(v, v_l), l_1}, v_l, v, l_1, w-1}\\
					}\Case{$l_1$}{
						\Update{c + \LocalCost{(v, v_l), l_2}, v_l, v, l_2, w-1}\\
					}\Case{$l_2$}{
						\Update{c + \LocalCost{(v, v_l), l_3}, v_l, v, l_3, w-1}\\
					}\Case{$l_3$}{
						\Update{c + \LocalCost{(v, v_l), l_4}, v_l, v, l_4, w-1}\\
					}\Case{$l_4$}{
						\Update{c + \LocalCost{(v, v_l), l_5}, v_l, v, l_5, w-1}
					}
				}
			}
		}
	}
\end{procedure}

\end{document}